\tikzstyle{rect} = [rectangle, rounded corners, minimum width=1cm, minimum height=1cm,text centered, draw=black ]
\tikzstyle{arrow} = [thick,->,>=stealth]
\newcommand{\EE}{\mathbb{E}}
\newcommand{\ZZ}{\mathbb{Z}}
\newcommand{\PP}{\mathbb{P}}
\newcommand{\RR}{\mathbb{R}}
\newcommand{\NN}{\mathbb{N}}
\newcommand{\1}{\mathds{1}}
\newcommand{\m}{\mathbb}
\newcommand{\mc}{\mathcal}
\newcommand{\eps}{\varepsilon}
\newcommand{\defas}{\coloneqq}
\newcommand{\until}{\, .. \,}
\DeclareMathOperator{\Supp}{supp}
\DeclareMathOperator{\argmin}{argmin}
\theoremstyle{plain} 
\newtheorem{@theorem}{Theorem}[section]
\newenvironment{theorem}{\begin{@theorem}}{\end{@theorem}}
\newtheorem{Lemma}[@theorem]{Lemma}
\newtheorem{Corollary}[@theorem]{Corollary}
\newtheorem{Proposition}[@theorem]{Proposition}
\theoremstyle{definition} 
\newtheorem{Definition}[@theorem]{Definition}
\newtheorem{Example}[@theorem]{Example}
\theoremstyle{remark} 
\newtheorem{Remark}[@theorem]{Remark}
\theoremstyle{definition}
\begin{document}

\title{ 
\Large Finite-memory strategies in POMDPs \\ with Long-run Average Objectives 
\thanks{Partially supported by  Austrian Science Fund (FWF) NFN Grant No RiSE/SHiNE S11407, by CONICYT-Chile through grant PII 20150140, by
    ECOS-CONICYT through grant C15E03, and by COST Action GAMENET. This project also benefited from the support of the FMJH Program PGMO RSG 2018-0031H and from the support of EDF, Thales, Orange and Criteo. }  }
\author{ 
Krishnendu Chatterjee \thanks{IST Austria, Klosterneuburg, Austria.} \\
\and
Raimundo Saona $^\dag$ \\ 
\and Bruno Ziliotto \thanks{CEREMADE, CNRS, Universit\'e Paris Dauphine, PSL Research Institute, Paris, France.}
}

\date{}

\maketitle

\begin{abstract}
Partially Observable Markov Decision Processes (POMDPs) is a standard model for dynamic systems with probabilistic and nondeterministic behaviour in uncertain environments. We prove that in POMDPs with long-run average objective, the decision-maker has approximately optimal strategies with finite memory. This implies notably that approximating the long-run value is recursively enumerable, as well as a characterization of the continuity property of the value with respect to the transition function.   
\end{abstract}

\thispagestyle{empty}
\clearpage
\setcounter{page}{1}

\section{Introduction} 
\label{sec: Introduction}
In a Partially Observable Markov Decision Process (POMDP), at each stage, the decision-maker chooses an action that determines, together with the current state, a stage reward and the distribution over the next state. The state dynamic is imperfectly observed by the decision-maker, who receives a stage signal on the current state before playing. Thus, POMDPs generalize the Markov Decision Process (MDP) model of Bellman \cite{Bellman_57}. 

POMDPs are widely used in prominent applications such as in computational 
biology~\cite{Bio-Book}, software verification~\cite{CCHRS11}, reinforcement learning~\cite{LearningSurvey}, to name a few. 
Even special cases of POMDPs, namely, probabilistic automata or 
blind MDPs, where there is only one signal, is also a standard model
in several applications~\cite{Rabin63, PazBook, Bukharaev}. 

In many of these applications, the duration of the problem is huge. Thus, considerable attention has been devoted to the study of POMDPs with long duration. A standard way is to consider the long-run objective criterion, where the total reward is the expectation of the inferior limit average reward (see \cite{SURVEYAC} for a survey). The value for this problem is known to coincide with several classical definitions of long-run values (asymptotic value, uniform value, general uniform value, long-run average value, uncertain-duration process value \cite{RSV02, R11, RV16, NS10, VZ16}) and has been characterized in \cite{RV16}. In this paper, we will simply name this common object \textit{value}. Thus, strong results are available concerning the existence and characterization of the value. 

This is in sharp contrast with the study of long-run optimal strategies. Indeed, before our work, little was known about the sophistication of strategies that approximate the value. It has been shown that:(i) stationary strategies approximate the value in MDPs~\cite{B62}; and (ii) belief-stationary strategies approximate the value in blind MDPs~\cite{RSV02} and POMDPs with an ergodic structure~\cite{Borkar_2000}. 
\\

Our main contributions are:
\begin{itemize}
    \item {\em Strategy complexity.}
    We show that for every POMDP with long-run average objectives, for every $\eps>0$, there is a 
    finite-memory strategy (i.e. generated by a finite state automaton) that achieves expected reward within $\eps$ of the optimal value. 
    In the case of blind MDP finite memory is equivalent to finite recall (i.e. decisions are defined using only the last actions), but finite recall cannot achieve $\eps$-approximations in general POMDPs.

    \item {\em Computational complexity.}
    An important consequence of our above result is that the decision version of the approximation problem for POMDPs with long-run average objectives (see Definition \ref{Definition: Decision version of approximating the value}) is {\em recursively enumerable (r.e.)} but not decidable. 
    Our results on strategy complexity imply the recursively enumerable upper bound and the lower bound is a consequence of~\cite{MHC03}.
    
    \item {\em Value property.}
    The long-run reward of a finite-memory strategy is robust upon small perturbations of the transition function, where the notion of perturbation over the transition function is defined as in Solan \cite{Solan03} and Solan and Vieille \cite{SV10}. This implies lower semi-continuity of the value function upon such small perturbations. This result is tight in the sense that there is an example with a discontinuous value function (see Example~\ref{Example: Discontinuity}).
\end{itemize}
A natural question would be to ask for an upper bound on the size of the memory needed to generate $\eps$-optimal strategies, in terms of the data of the POMDP. In fact, a previous \textit{undecidability} result \cite{MHC03} shows that such an upper bound can not exist (see Subsection \ref{Sec: Complexity}). Thus, the existence of $\eps$-optimal strategies with finite memory is, in some sense, the best possible result one can have in terms of strategy complexity. 

\section{Model and statement of results}
\label{sec: Preliminaries}
\subsection{Model}
\label{sec: Model}

Throughout the paper we mostly use the following notation: (i) sets are denoted by calligraphic letters, e.g. $\mc{A}, \mc{H}, \mc{K}, \mc{S}$; (ii) elements of these sets are denoted by lowercase letters, e.g. $a, h, k, s$; and (iii) random elements with values in these sets are denoted by uppercase letters, e.g. $A, H, K, S$. For a set $\mc{C}$, denote $\Delta(\mc{C})$ the set of probability measure distributions over $\mc{C}$, and $\delta_c$ the Dirac measure at some element $c \in \mathcal{C}$. We will slightly abuse notation by not making a distinction between a probability measure (which can be evaluated on events) and its corresponding probability density (which can be evaluated on elements). 

Consider a POMDP $\Gamma = (\mc{K}, \mc{A}, \mc{S}, q, g)$, with finite state space $\mc{K}$, finite action set $\mc{A}$, finite signal set $\mc{S}$, transition function $q \colon \mc{K} \times \mc{A} \rightarrow \Delta(\mc{K} \times \mc{S})$ and reward function $g \colon \mc{K} \times \mc{A} \rightarrow [0,1]$.

Given $p_1 \in \Delta(\mc{K})$, called \emph{initial belief}, the POMDP starting from $p_1$ is denoted by $\Gamma(p_1)$ and proceeds as follows: 
\begin{itemize}
\item
An initial state $K_1$ is drawn from $p_1$. The decision-maker knows $p_1$ but does not know $K_1$.
\item
At each stage $m \geq 1$, the decision-maker takes some action $A_m \in \mc{A}$. This action determines a stage reward 
$G_m \defas g(K_m, A_m)$, where $K_m$ is the (random) state at stage $m$. Then, the pair $(K_{m+1}, S_m)$ is drawn from $q(K_m, A_m)$. The next state is $K_{m+1}$ and the decision-maker is informed of the signal $S_m$, but neither of the reward $G_m$ nor of the state $K_{m+1}$. 
\end{itemize}
At stage $m$, the decision-maker remembers all the past actions and signals, which is called $\textit{history before stage $m$}$. Let $\mc{H}_m \defas (\mc{A} \times \mc{S})^{m-1}$ be the set of histories before stage $m$, with the convenient notation $(\mc{A} \times \mc{S})^{0} \defas \{\emptyset\}$. A strategy is a mapping $\sigma \colon \cup_{m \geq 1} \mc{H}_m \to \mc{A}$. The set of strategies is denoted by $\Sigma$. The randomness introduced by the transition function, $q \colon \mc{K} \times \mc{A} \rightarrow \Delta(\mc{K} \times \mc{S})$, suggests that a history $h_m \in \mc{H}_m$ can occur under many sequences of states $(k_1, k_2, \ldots, k_{m-1})$. The infinite sequence $(k_1, a_1, s_1, k_2, a_2, s_2, \ldots)$ is called a \emph{play}, and the set of all plays is denoted by $\Omega$. 

For $p_1 \in \Delta(\mc{K})$ and $\sigma \in \Sigma$, define $\m{P}^{p_1}_{\sigma}$ the law induced by $\sigma$ and the initial belief $p_1$ on the set of plays of the game $\Omega = (\mc{K} \times \mc{A} \times \mc{S})^{\NN}$,
and $\m{E}^{p_1}_{\sigma}$ the expectation with respect to this law. 
For simplicity, identify $\mc{K}$ with the set of extremal points of $\Delta(\mc{K})$.

Let 
\begin{equation*}
    \gamma^{p_1}_{\infty}(\sigma)
        \defas  \m{E}^{p_1}_{\sigma} \left(\liminf_{n \rightarrow +\infty} \frac{1}{n} \sum_{m=1}^n G_m \right) \,,
\end{equation*}
and 
\begin{equation*}
    v_{\infty}(p_1)
        \defas \sup_{\sigma \in \Sigma} \gamma_{\infty}^{p_1}(\sigma) \,.
\end{equation*}
The term $\gamma_{\infty}^{p_1}(\sigma)$ is the long-term reward given by strategy $\sigma$ and $v_{\infty}(p_1)$ is the optimal long-term reward, called \textit{value}, defined as the supremum long-term reward over all strategies.

\begin{Remark}
It has been shown that $v_\infty$ 
coincides with the limit of the value of the $n$-stage problem and $\lambda$-discounted problem, as well as the uniform value and weighted uniform value (see \cite{RSV02,R11,RV16,VZ16}). In particular, we have:
\begin{align*}
    v_\infty(p_1)
        &= \lim_{n \rightarrow +\infty} \sup_{\sigma \in \Sigma} \m{E}^{p_1}_{\sigma} \left(\frac{1}{n} \sum_{m=1}^n G_m \right) \\
        &= \lim_{\lambda \rightarrow 0} \sup_{\sigma \in \Sigma} \m{E}^{p_1}_{\sigma} \left(\sum_{m \geq 1} \lambda(1-\lambda)^{m - 1} G_m \right) \\
        &= \sup_{\sigma \in \Sigma} \liminf_{n \rightarrow +\infty} \m{E}^{p_1}_{\sigma} \left( \frac{1}{n} \sum_{m=1}^n G_m \right) \,.
\end{align*}
\end{Remark}

\begin{Remark}
In the literature, the concept of strategy that we defined is often called \emph{pure strategy}, by contrast with \emph{behavior strategies} that use randomness by allowing strategies of the form $\sigma \colon \cup_{m \geq 1} \mathcal{H}_m \to \Delta(\mc{A})$. By Kuhn's theorem, enlarging the set of pure strategies to behaviour strategies does not change $v_{\infty}$ (see \cite{VZ16, F96}), and thus does not change our results. 
\end{Remark}

\begin{Definition}[Blind MDP]
A POMDP is called \emph{blind MDP} if the signal set is a singleton. 
\end{Definition}

Note that in a blind MDP, signals do not convey any relevant information. Therefore, a strategy is simply an infinite sequence of actions $(a_1,a_2,\dots) \in \mc{A}^{\m{N}}$.

\subsection{Contribution}
We start by defining several classes of strategies. Recall that $\Gamma(p_1)$ is the POMDP $\Gamma$ starting from $p_1$, which is known to the player.
\begin{Definition}[$\eps$-optimal strategy]
Let $p_1 \in \Delta(\mc{K})$ and $\eps>0$. A strategy $\sigma \in \Sigma$ is \emph{$\eps$-optimal} in $\Gamma(p_1)$ if 
\[
    \gamma^{p_1}_{\infty}(\sigma) \geq v_{\infty}(p_1) - \eps \,.
    \]

\end{Definition}
\begin{Definition}[finite-memory strategy]
    A strategy $\sigma$ is said to have \emph{finite memory} if it can be modeled by a finite-state transducer. Formally, $\sigma = (\sigma_u, \sigma_a, \mc{M}, m_0)$, where $\mc{M}$ is a finite set of memory states, $m_0$ is the initial memory state, $\sigma_a: \mc{M} \to \mc{A}$ is the action selection function and $\sigma_u \colon \mc{M} \times \mc{A} \times \mc{S} \to \mc{M}$ is the memory update function. 
\end{Definition}
\begin{Definition}[Finite-recall strategy]
\label{Def: Finite recall strategy}
    A strategy $\sigma$ is said to have \emph{finite recall} if there exists a constant $M > 0$ such that for all $h_M \in \mc{H}_M$, and for all $m > M$ and $h_{m - M} \in \mc{H}_{m - M}$, we have that $\sigma(h_{m - M}, h_M)$ does not depend on $h_{m-M}$. 
\end{Definition}

\begin{Remark}
For blind MDPs, finite-recall and finite-memory strategies coincide with the set of \textit{eventually periodic strategies}: a strategy $\sigma=(a_1,a_2,\ldots)$ is eventually periodic if there exists $T \geq 1$ and $N \geq 1$ such that for all $m \geq N$, $a_{m+T}=a_m$. This property does not extend to general POMDPs (see Proposition \ref{prop: example}): any finite-recall strategy has finite-memory, but the inverse is not true.
\end{Remark}
\begin{Remark}
Finite-memory strategies and finite-recall strategies have been investigated in the Shapley zero-sum stochastic game model \cite{Sha53}. In this framework, none of these strategies is enough to approximate the value, and a long-standing open problem is whether finite-memory strategies \textit{with a clock} are good enough (see \cite{HIN18, HIN20} for more details on this topic). 
\end{Remark}

Our main result is the following theorem.
\begin{theorem} 
\label{theo: POMDP}
For every POMDP $\Gamma$, initial belief $p_1$ and $\eps > 0$, there exists an $\eps$-optimal finite-memory strategy in $\Gamma(p_1)$. 
\end{theorem}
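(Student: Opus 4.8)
The plan is to reduce to the \emph{belief MDP} and to show that an $\eps$-optimal strategy may be taken to act on a finite discretization of the belief. Recall that $\Gamma(p_1)$ induces a perfect-information MDP on the compact belief space $\Delta(K)$: if $p_m \in \Delta(K)$ denotes the conditional law of the current state $k_m$ given the history, then $p_m$ is computable from the observed action--signal pairs, it evolves deterministically by Bayes' rule $p_{m+1}=\Phi(p_m,a_m,s_m)$, and the expected stage reward is $r(p_m,a_m):=\sum_{k\in K} p_m(k)\,g(k,a_m)$, which is linear in $p_m$. Under this correspondence, $v_\infty(p_1)$ is exactly the long-run average value of the belief MDP, and by the Remark it equals $\lim_n v_n(p_1)$, where $v_n$ is the $n$-stage value; I will also use that the $v_n$ are Lipschitz (being piecewise linear) and that, by the cited uniform-value results, they converge uniformly on $\Delta(K)$, so that $v_\infty$ is continuous. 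A strategy is finite-memory precisely when its action depends only on a finitely-valued statistic of the history; hence it suffices to produce a near-optimal belief-MDP strategy that factors through a finite set.

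The construction I would use is to let the memory be an \emph{approximate belief}. Fix $\delta>0$, a finite $\delta$-net $B\subset\Delta(K)$ with nearest-point projection $\pi:\Delta(K)\to B$, and a map $\alpha:B\to A$; define the finite-memory strategy $\sigma$ with memory set $B$, initial memory $\pi(p_1)$, action function $\alpha$, and update $\widehat p\mapsto \pi\big(\Phi(\widehat p,\alpha(\widehat p),s)\big)$ on signal $s$. This is finite-memory by construction. To choose $\alpha$ I would take a stationary policy that is $\eps/2$-optimal for the $\lambda$-discounted belief MDP with $\lambda$ small (vanishing discount), which under the ergodicity supplied below is average-optimal up to $\eps/2$, and then discretize it onto $B$; since $A$ is finite and only $\eps$-optimality is required, the discretized $\alpha$ loses at most $O(\delta)$.

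Correctness then rests on a single quantitative point: the tracked belief $\widehat p_m$ carried in memory must remain close to the true belief $p_m$ \emph{uniformly in $m$}, for only then does playing $\alpha(\widehat p_m)$ incur small per-stage regret and does the time average of the rewards stay within $\eps$ of $v_\infty(p_1)$. This is the main obstacle, because the Bayesian filter $\Phi$ is not a contraction in general and the rounding errors introduced by $\pi$ could in principle accumulate, letting $\widehat p_m$ drift arbitrarily far from $p_m$. The natural way to control this is \emph{filter stability}: when every transition matrix is strictly positive, Birkhoff's theorem makes $\Phi$ a strict contraction in Hilbert's projective metric, which yields a bound on $\|\widehat p_m-p_m\|$ that is uniform in $m$ and tends to $0$ with the mesh $\delta$; the general case reduces to this by mixing each action with an $\eta$-fraction of a fully informative perturbation, at a cost of $O(\eta)$ in value, and letting $\eta\to0$. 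Granting the uniform tracking bound, one gets $\tfrac1n\sum_{m=1}^n \EE^{p_1}_\sigma[g_m]\ge v_\infty(p_1)-\eps$ for all large $n$, and an application of Fatou's lemma converts this into the almost-sure $\liminf$ bound required by the definition of $\gamma^{p_1}_\infty(\sigma)$, giving $\gamma^{p_1}_\infty(\sigma)\ge v_\infty(p_1)-\eps$. If one prefers to avoid the positivity perturbation, the same conclusion can be reached by a block construction that replays an $N$-stage near-optimal plan from the current canonical belief, so that only \emph{finite-horizon} tracking accuracy (over the fixed $N$ stages) is ever needed, the blocks being stitched together using the invariance of the gain function $v_\infty$ under near-optimal transitions.
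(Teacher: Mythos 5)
Your proposal takes a genuinely different route from the paper, but it has two gaps that I believe are fatal as stated. The first is the perturbation step used to obtain filter stability. Mixing each action's transition kernel with an $\eta$-fraction of a fully supported (or fully informative) kernel does \emph{not} cost $O(\eta)$ in value: long-run average values are discontinuous in the transition data. For instance, take two absorbing states, one with reward $1$ and one with reward $0$; after mixing, the chain becomes ergodic and the value drops from $1$ to $1/2$ for \emph{every} $\eta>0$, so letting $\eta\to 0$ recovers nothing. Since your uniform tracking bound is exactly what requires the Birkhoff contraction, and the contraction is exactly what requires positivity, the main line of the argument collapses. The second gap is the choice of $\alpha$: you take a stationary $\lambda$-discounted-optimal policy on the belief MDP and assert it is average-optimal up to $\eps/2$ as $\lambda\to 0$. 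Uniform convergence of the discounted values to $v_\infty$ (which does hold here) does not imply that discounted-optimal \emph{stationary} policies are near-optimal for the average criterion in an infinite-state MDP; indeed the paper's own remark points out that the existence of $\eps$-optimal belief-stationary strategies is an open problem, so this step assumes something at least as strong as an open question.

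Your fallback block construction is the right instinct and is close in spirit to what the paper actually does, but it is missing the ingredient that makes restarting sound: a finite statistic to restart from, together with a reason why restart errors do not accumulate across blocks (Bayes updates are expansive, so conditioning on low-probability signals can amplify any belief error, and ``invariance of the gain under near-optimal transitions'' only controls one block, not the concatenation). The paper resolves this with the ergodic lemma of Venel--Ziliotto (Lemma \ref{theo: VZ16}): one first reaches, at small cost, a special belief $p^*$ from which a $0$-optimal strategy $\sigma$ has the property that the average rewards converge almost surely to a constant $\gamma^i_\infty$ depending only on the equivalence class $K_i$ of the initial state. The finite memory is then not a discretized belief at all, but the \emph{super-support} $(B^1[h_m],\ldots,B^r[h_m])$, i.e.\ which states are reachable under $\sigma$ from each class $K_i$ given the history; Lemma \ref{lemma: POMDP} shows that from \emph{any} state of $B^i[h_m]$ the shifted strategy $\sigma[h_m]$ earns exactly $\gamma^i_\infty$, so replaying $\sigma$ from a canonical history with the same super-support loses only $\eps$ per block, uniformly in the block index. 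This exact-invariance property, which your proposal has no substitute for, is what eliminates error accumulation and any need for filter stability or belief discretization.
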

\begin{Remark}
A previous complexity result \cite{MHC03} shows that the size of the memory can not be bounded from above in terms of the data of the POMDP (see Subsection \ref{Sec: Complexity}). 
\end{Remark}
\begin{Corollary} 
\label{theo: blind MDP}
For every blind MDP $\Gamma$, initial belief $p_1$ and $\eps > 0$, there exists an $\eps$-optimal finite-memory strategy in $\Gamma(p_1)$, and thus the strategy is eventually periodic and has finite recall. 
\end{Corollary}

Lastly, finite-recall is not enough to ensure $\eps$-optimality in general POMDPs. 
\begin{Proposition} \label{prop: example}
There exists a POMDP, and $\eps > 0$, with no $\eps$-optimal finite-recall strategy.  
\end{Proposition}

The rest of the paper is organized as follows. Section \ref{Sec: Consequences} explains the consequences of our result in terms of complexity and model robustness. Section \ref{Sec: Examples} introduces examples used to prove negative results and to illustrate our techniques. Section \ref{Sec: Structure of the proof} introduces two key lemmata, and shows that they imply Theorem \ref{theo: POMDP}. Section \ref{sec: Technique: Super-support} proves one of the two lemmata and develops what we call \textit{super-support} based strategies in details. Missing proofs are in the appendices.

\section{Consequences of the results}
\label{Sec: Consequences}

\subsection{Complexity}
\label{Sec: Complexity}

\noindent{\em Decidability.}
A decision problem consists in deciding between two options given an input (accepting or rejecting) and its complexity is characterized by Turing machines. 
A Turing machine takes an input and, if it halts, it either accepts or rejects it. 
If it halts for all possible inputs in a finite number of steps, then the Turing machine is considered an algorithm. 
An algorithm solves a decision problem if it takes the correct decision for all inputs. 
The class of decision problems that are solvable by an algorithm is called {\em decidable}.
Two natural generalizations of decidable problems are: {\em recursively enumerable (r.e.)} and {\em co-recursively enumerable (co-r.e.)}.
The decision problems in r.e. (resp., co-r.e.) are those for which there is a Turing machine that accepts (resp., rejects) every input that should be accepted (resp., rejected) according to the problem, but, on other inputs, it needs not to halt.

Notice that the class of decidable problems is the intersection of r.e. and co-r.e.
In this work, the algorithmic problem of interest is the following.

\begin{Definition}[Decision version of approximating the value]
\label{Definition: Decision version of approximating the value}
    Let $p_1 \in \Delta(\mc{K})$. Given $x \in [0, 1]$, $\eps > 0$ such that $v_\infty(p_1) > x + \eps$ or $v_\infty(p_1) < x - \eps$, the problem consists in deciding which one is the case: to accept means to prove that $v_\infty(p_1) > x + \eps$ holds, while to reject means to prove the opposite.
\end{Definition}


\smallskip\noindent{\em Previous results and implication of our result.}
It is known that the decision version of the approximation problem is not decidable~\cite{MHC03} (even for blind MDPs). 
However, the complexity characterization has been open. Thanks to Theorem \ref{theo: POMDP}, we can design a Turing machine that accepts every input that should be accepted for this problem. 

Consider playing a finite-memory strategy $\sigma$. Then, the dynamics of the game can be described by a finite Markov chain. Therefore, the reward obtained by playing $\sigma$ (i.e. $\gamma^{p_1}_\infty(\sigma)$) can be deduced from its stationary measure, which  can be computed in polynomial time by solving a linear programming problem~\cite[Section 2.9, page 70]{FV97}. Our protocol checks the reward given by every finite-memory strategy to approximate the value of the game $v_\infty(p_1)$. By Theorem \ref{theo: POMDP}, if $v_\infty(p_1) > x + \eps$ holds, a finite-memory strategy that achieves a reward strictly greater than $(x + \eps)$ will be eventually found and our protocol will accept the input. 
On the other hand, if $v_\infty(p_1) < x - \eps$, the protocol will never find out that this is the case because there are infinitely many finite-memory strategies, so it will not halt. 
Thus, our result establishes that the approximation version of the problem is in r.e., 
and the previously known results imply that the problem is not decidable. Formally, we have the following result.
\begin{Corollary} 
\label{Corollary: r.e.-completeness of decision problem}
The decision version of approximating the value is r.e. but not decidable.
\end{Corollary}
\begin{Remark}
The former paragraph shows that no upper bound on the size of the memory used by $\eps$-optimal strategies can be proved. Indeed, if such a bound existed, one could modify the previous algorithm in the following way: reject the input if every finite-memory strategy of size lower than the bound has been enumerated. This would imply that the decision version of approximating the value is decidable, which is a contradiction. 
\end{Remark}
\subsection{Objective comparison}

In this section, we contrast our results with other natural objectives.

Recall that the value of $\Gamma(p_1)$ is defined as
    \[
    v_\infty(p_1) = \sup_{\sigma \in \Sigma} \EE_{\sigma}^{p_1}\left( \liminf_{n \rightarrow \infty} \frac{1}{n} \sum_{m = 1}^{n} G_m \right) \,.
    \]
We say this is a \emph{liminf-average} objective. Consider replacing $\liminf_{n \rightarrow \infty} \frac{1}{n} \sum_{m = 1}^{n} G_m$ by: 
(i) $\limsup_{n \rightarrow \infty} \frac{1}{n} \sum_{m = 1}^{n} G_m$, which we call \emph{limsup-average} objective; (ii) $\limsup_{n \rightarrow \infty} G_n$, which we call \emph{limsup} objective.

\begin{Proposition}
\label{Proposition: Example of no finite memory for different objectives}
    For both limsup-average and limsup objective, there exists a POMDP, and $\eps > 0$, with no $\eps$-optimal finite-memory strategy.  
\end{Proposition}

This negative result, proved in Section \ref{Apendix: Proof of Proposition, no finite memory}, does not imply any computational complexity characterization for the limsup-average or limsup objective, and whether the approximation of the value problem for limsup-average objectives is recursively enumerable remains open.
However, it shows that any approach based on finite-memory strategies cannot establish recursively enumerable bounds for the approximation problem. 

Let us focus on the limsup objective. Limsup objective is arguably simpler than the liminf-average objective and, to formalize this statement, we can compare the complexity of the objects themselves irrespective of any particular context or model (such as POMDPs). The \emph{Borel hierarchy} describes the complexity of an objective by the number of quantifier alternations needed to describe it. Its construction is similar to that of the Borel $\sigma$-algebra, or $\sigma$-field, and is defined as follows.

\begin{Definition}[Borel hierarchy]
    Consider $h_m \in \mc{H}_m = (\mc{A} \times \mc{S})^{m - 1}$ a finite history of the game. The cylinder set generated by $h_m$ is given by $\{h_m\} \times (\mc{A} \times \mc{S})^\NN$. Finite intersection, unions and complements of the cylinder sets generated by finite histories form the first level in the hierarchy. Countable unions of the first level form $\Sigma_1$ and countable intersections form $\Pi_1$. The next level is always obtained from the previous one: countable unions of $\Pi_i$ give $\Sigma_{i+1}$ and countable intersections of $\Sigma_i$ give $\Pi_{i+1}$. The nested sequence of family of problems $\{\Sigma_i \cup \Pi_i\}_{i \geq 1}$ is called \emph{Borel hierarchy}.
\end{Definition}

For example, limsup objective can be described as countable intersection of countable unions of rewards: given a family of sets $(\mc{C}_n)_{n \geq 1}$, $\limsup_{n \rightarrow \infty} \mc{C}_n = \cap_{n \geq 1} \cup_{m \geq n} \mc{C}_m$. The formal result is the following (see~\cite{Ch07}).

\begin{Proposition}
     The limsup objective is $\Pi_2$-complete, i.e. complete for the second level of the Borel hierarchy, whereas the liminf-average objective is $\Pi_3$-complete, i.e. complete for the third level of the Borel hierarchy.
\end{Proposition}

While the notion of Borel hierarchy characterizes the topological complexity for objectives, 
a similar notion of \emph{Arithmetic hierarchy} characterizes the computational complexity for decision problems.


\begin{Definition}[Arithmetic hierarchy]
    Denote $\Sigma_0^1$ the class of r.e. problems and $\Pi_0^1$ the co-r.e. problems. For $i > 1$, define $\Sigma_0^i$ as the class of problems solved by Turing machines with access to oracles for $\Pi_0^{i-1}$ and $\Pi_0^i$ is similarly defined with oracles for $\Sigma_0^{i-1}$. The nested sequence of family of problems $\{\Sigma_0^i \cup \Pi_0^i\}_{i \geq 1}$ is called \emph{Arithmetic hierarchy}.
\end{Definition}

By Corollary \ref{Corollary: r.e.-completeness of decision problem}, we have that POMDPs with a liminf-average objective is in $\Sigma_0^1 \setminus (\Sigma_0^1 \cap \Pi_0^1)$. On the other hand, it was shown in~\cite{BGB12, BG09} that POMDPs with limsup objective with boolean rewards is $\Sigma_0^2$-complete.


We conclude this section with a summary chart contrasting liminf-average and limsup objectives. The surprising result is the complexity switch: limsup objective has lower Borel hierarchy complexity but higher Arithmetic hierarchy complexity in the context of POMDPs. 

\begin{figure}[H]
\centering
    \begin{tabular}{ |p{3cm}||p{4.5cm}|p{4.5cm}|  }
    \hline
    \multicolumn{3}{|c|}
    {Objective comparison in POMDPs} \\
    \hline
    Objective & Borel hierarchy & Arithmetic Hierarchy \\
    \hhline{|=||=|=|}
    limsup & $\Pi_2$-complete & $\Sigma_0^2$-complete \\
    \hline
    liminf-average & $\Pi_3$-complete & $\Sigma_0^1 \setminus (\Sigma_0^1 \cap \Pi_0^1)$ \\
    \hline
    \end{tabular}
    
    \caption{Objective comparison in POMDPs}
\end{figure}

\subsection{Robust $\eps$-optimal strategies}

Consider a POMDP $\Gamma=(\mc{K}, \mc{A}, \mc{S}, q, g)$. It is well known that the value function is continuous with respect to perturbations of the reward function $g$ and the initial belief $p_1$. Now, we show a robustness result concerning the transition function $q$.


In applications, just as in any stochastic model, the structure of the model is decided first, and then the specific probabilities are either estimated or fixed. The values of transition probabilities are approximations: an $\eps$-perturbation of these probabilities are not expected to have an impact on the modelling. In our setting, the transitions are encoded in the function $q \colon \mc{K} \times \mc{A} \to \Delta(\mc{K} \times \mc{S})$ and we would expect some robustness against perturbations of the values it takes.


The notion of perturbation over $q$ is measured as in Solan \cite{Solan03} and Solan and Vieille \cite{SV10}, where perturbations in each transition probability are measured as relative differences, not additive differences. Formally, define the semimetric
    \[
    d(q,q')
        =\max_{\substack{k \in \mc{K}, \ a \in \mc{A} \\ k' \in \mc{K}, \ s \in \mc{S}}} \left\{
            \frac{q(k,a)(k',s)}{q'(k,a)(k',s)} \,,
            \frac{q'(k,a)(k',s)}{q(k,a)(k',s)} 
            \right\}
            - 1 \,.
    \]

Under this notion, and taking $q$ and $q'$ close to each other, we can prove the existence of strategies which are approximately optimal for the POMDP corresponding to $q$ and perform almost as well when they are applied to the POMDP corresponding to $q'$. To formally state this notion of robustness, let us give the following definition.


\begin{Definition}[Robust strategies]
    Given a POMDP $\Gamma$ with transition function $q$, an initial belief $p_1 \in \Delta(\mc{K})$, we say that $\sigma$ is a \emph{robust strategy} for $\Gamma(p_1)$ if the following condition holds: $\forall \eta > 0$ $\exists \delta > 0$ such that
        \[
        d(q, q') \leq \delta \quad \Rightarrow \quad \gamma'^{p_1}_\infty(\sigma) \geq \gamma^{p_1}_\infty(\sigma) - \eta \,,
        \]
    where $\gamma_\infty$ is the long-term reward in $\Gamma$ and $\gamma'_{\infty}$ is the long-term reward in $\Gamma'=(\mc{K}, \mc{A}, \mc{S}, g, q')$. 
\end{Definition}

\begin{Lemma}
    Any finite-memory strategy is robust. Thus, in any POMDP and for any $\varepsilon>0$, there exists a robust $\eps$-optimal finite-memory strategy.  
\end{Lemma}

\begin{proof}

Let $\Gamma=(\mc{K}, \mc{A}, \mc{S}, g, q)$ be a POMDP. Consider $\sigma= (\sigma_u, \sigma_a, \mathcal{M}, m_0)$ a finite-memory strategy for $\Gamma(p_1)$. Playing $\sigma$ from $p_1$ induces a Markov Chain $(Y_n)_{n \geq 1}$ on $\mc{K} \times \mc{S} \times \mc{M}$. Define $\tilde{g} \colon \mc{K} \times \mc{M} \to [0, 1]$ by $\tilde{g}(k,m)  \defas  g(k,\sigma_a(m))$.

Now, consider the 0-Player stochastic game with reward $\tilde{g}$ and transitions given by the kernel of the Markov Chain $(Y_n)_{n \geq 1}$. Let $s_0 \in \mc{S}$ be any signal. By definition, for any $k \in \mc{K}$, the value of this stochastic game starting from $(k,s_0,m_0)$ coincides with $\gamma^k_\infty(\sigma)$. Using \cite{Solan03}[Theorem 6, page 841], we deduce that
    \[
    \gamma'^k_{\infty}(\sigma) \geq \gamma^k_{\infty}(\sigma)-4 |\mc{K}| |\mc{S}| |\mc{M}| d(q,q') \,.
    \]
Integrating $k$ over $p_1$ yields
    \[
    \gamma'^{p_1}_{\infty}(\sigma) \geq \gamma^{p_1}_{\infty}(\sigma)-4 |\mc{K}| |\mc{S}| |\mc{M}| d(q,q') \,.
    \]
Taking $\delta = \eta (4 |\mc{K}| |\mc{S}| |\mc{M}|)^{-1}$, we conclude that $\sigma$ is a robust strategy. By Theorem \ref{theo: POMDP}, for all $\eps>0$, there exists an $\eps$-optimal finite-memory strategy, which is thus robust. 
\end{proof}

\begin{Corollary}
    Let $\mc{K}, \mc{A}, \mc{S}$ be finite sets, $g \colon \mc{K} \times \mc{A} \rightarrow \m{R}$ a reward function, and $p_1 \in \Delta(\mc{K})$ an initial belief. The mapping from $(\Delta(\mc{K} \times \mc{S})^{\mc{K} \times \mc{A}}, d)$ to $\m{R}$ that maps each transition function $q$ to the value at $p_1$ of the POMDP $(\mc{K}, \mc{A}, \mc{S}, g, q)$ is lower semi-continuous. 
\end{Corollary}
\begin{proof}
Let $q \in \Delta(\mc{K} \times \mc{S})^{\mc{K} \times \mc{A}}$. Let $\Gamma = (\mc{K}, \mc{A}, \mc{S}, q, g)$. By the previous lemma, for all $\eps > 0$, there exists $\sigma_{\eps}$ a robust $\eps$-optimal strategy in $\Gamma(p_1)$. Take $\eta = \eps$, by robustness of $\sigma_\eps$, there exists $\delta > 0$ such that, for all $q' \in \Delta(\mc{K} \times \mc{S})^{\mc{K} \times \mc{A}}$, we have that if $d(q, q') \leq \delta$, then $\gamma'^{p_1}_{\infty}(\sigma_{\eps}) \geq \gamma^{p_1}_{\infty}(\sigma_{\eps}) - \eps$. Also, by $\eps$-optimality of $\sigma_\eps$, we have that $\gamma^{p_1}_\infty(\sigma_{\eps}) \geq v_\infty(p_1) - \eps$. Then,
    \[
    v'_\infty(p_1)
        \geq \gamma'^{p_1}_{\infty}(\sigma_{\eps}) 
        \geq v_{\infty}(p_1) - 2 \eps \,.
    \]
Taking $\eps \to 0$, we conclude that
    \[
    \liminf_{q' \to q} v'_{\infty}(p_1) 
        \geq v_{\infty}(p_1) \,,
    \]
and thus $v_\infty$ is lower semi-continuous with respect to $q$.
\end{proof}

Lower semi-continuity of the value function is the best result one can achieve in the following sense.

\begin{Proposition}
\label{Proposition: discontinuity}
	There is a POMDP such that the mapping from $(\Delta(\mc{K} \times \mc{S})^{\mc{K} \times \mc{A}}, d)$ to $\m{R}$ that maps each transition function $q$ to the value at $p_1$ of the POMDP $(\mc{K}, \mc{A}, \mc{S}, g, q)$ is discontinuous.
\end{Proposition}


\section{Examples}
\label{Sec: Examples}

In this section, we introduce examples to prove negative results (Propositions \ref{prop: example}, \ref{Proposition: Example of no finite memory for different objectives} and \ref{Proposition: discontinuity}) and to illustrate our techniques later on.

\subsection{Negative results}
\label{Sec: Negative results}

Let us prove Propositions \ref{prop: example} and \ref{Proposition: Example of no finite memory for different objectives} by presenting an example for each statement.

\subsubsection{Proof of Proposition \ref{prop: example}}

We will prove that there exists a POMDP and $\varepsilon>0$ with no $\varepsilon$-optimal finite-recall strategy by an explicit construction. Recall that a strategy has finite recall if it uses only a finite number of the last stages in the current history to decide the next action (see Definition \ref{Def: Finite recall strategy}). Therefore, our construction should have the property that, for any finite-recall strategy, there is a pair of finite histories such that:
\begin{enumerate}
    \item The last stages are identical, i.e., the player did the same actions and received the same signals in the last part of both histories (but the starting point was different).
    
    \item Taking the same decision in both histories leads to losing some reward that can not be compensated in the long-run. 
    
    \item The previous loss does not decrease to zero by increasing the amount of memory.
\end{enumerate}

\begin{Example}
\label{Exam: Finite recall is not enough for POMDPs}
Consider the POMDP $\Gamma = (\mc{K}, \mc{A}, \mc{S}, q, g)$ with five states: $k_0, k_1, \ldots, k_4$. The initial state is $k_0$ and players know it (formally, the initial belief is $\delta_{k_0}$). The state $k_4$ is an absorbing state from where it is impossible to get out and rewards are zero. The states $k_1$ and $k_2$ form a sub-game where the optimal strategy is trivial. This is the same for the state $k_3$. From $k_0$ a random initial signal is given indicating which sub-game the state moved to. The key idea is that there is an arbitrarily long sequence of actions and signals which can be gotten in both sub-games, but the optimal strategy behaves differently in each of them. Therefore, to forget the initial signal of the POMDP leads to at most half of the optimal value.

Figure \ref{Figure: Finite recall for POMDPs} is a representation of $\Gamma$: first under action $a$ and then action $b$. Each state is followed by the corresponding reward, and the arrows include the probability for the corresponding transition along with the signal obtained.

\begin{figure}[h] 
\begin{subfigure}{0.5 \textwidth}
\centering
    \begin{tikzpicture}[->,>=stealth',shorten >=1pt,auto,node distance=3cm, semithick]
    
    \node[state] (k0) [rect] {$k_0 | 0$};
    \node[state] (k1) [rect, below of = k0, xshift = 1.5cm, yshift = 20] {$k_1 | 1$};
    \node[state] (k2) [rect, below of = k1, yshift = 10] {$k_2 | 1$};
    \node[state] (k3) [rect, below of = k0, xshift = -1.5cm, yshift = 20] {$k_3 | 1$};
    \node[state] (k4) [rect, below of = k3, yshift = 10] {$k_4 | 0$};

    \path   (k0) edge [above, sloped] node {$1/2; s_2$} (k1)
    (k1) edge node {$1/2; s_2$} (k2)
    (k2) edge node {$1; s_1$} (k4)
    (k1) edge [loop right] node {$1/2; s_1$} (k1)
    (k0) edge [above, sloped] node {$1/2; s_1$} (k3)
    (k3) edge [loop left] node {$1/2; s_1$} (k3)
    (k3) edge [loop below] node {$1/2; s_2$} (k3)
    (k4) edge [loop left] node {$1; s_1$} (k4);
    
    \end{tikzpicture}
    \caption{Action $a$}
\end{subfigure}
\begin{subfigure}{0.5 \textwidth}
\centering
    \begin{tikzpicture}[->,>=stealth',shorten >=1pt,auto,node distance=3cm, semithick]
    
    \node[state] (k0) [rect] {$k_0 | 0$};
    \node[state] (k1) [rect, below of = k0, xshift = 1.5cm, yshift = 20] {$k_1 | 1$};
    \node[state] (k2) [rect, below of = k1, yshift = 10] {$k_2 | 1$};
    \node[state] (k3) [rect, below of = k0, xshift = -1.5cm, yshift = 20] {$k_3 | 1$};
    \node[state] (k4) [rect, below of = k3, yshift = 10] {$k_4 | 0$};

    \path   (k0) edge [above, sloped] node {$1/2; s_2$} (k1)
    (k1) edge [above, sloped] node {$1; s_1$} (k4)
    (k2) edge [right] node {$1; s_1$} (k1)
    (k0) edge [above, sloped] node {$1/2; s_1$} (k3)
    (k3) edge [left] node {$1; s_1$} (k4)
    (k4) edge [loop left] node {$1; s_1$} (k4);
    
    \end{tikzpicture}
    \caption{Action $b$}
\end{subfigure}
    \caption{Finite recall is not enough for POMDPs}
    \label{Figure: Finite recall for POMDPs}
\end{figure}
\end{Example}

\newpage

The sub-game of $k_1$ and $k_2$ has a unique optimal strategy: play action $a$ until receiving signal $s_2$, then play action $b$ once and repeat. The value of this sub-game is $1$ and deviating from the prescribed strategy would lead to a long-run reward of $0$. Similarly, the value of the sub-game of $k_3$ has a unique optimal strategy: to always play action $a$. Again, the value of this sub-game is $1$ and playing any other strategy leads to a long-run reward of $0$.

By the previous discussion, the value of this game starting from $k_0$ is $1$. On the other hand, the maximum value obtained by strategies with finite recall is only $1/2$, by playing, for example, always action $a$. Finite-recall strategies achieve at most $1/2$ because, no matter how much finite recall there is, by playing the game the decision-maker faces a history of having played always action $a$ and always receiving a signal $s_1$, except for the last signal which is $s_2$. Then, if action $b$ is played, the second sub-game is lost; if action $a$ is played, the first sub-game is lost. That is why, for any $0 < \eps < 1/2$, there is no $\eps$-optimal finite-recall strategy for this POMDP.

\subsubsection{Proof of Proposition \ref{Proposition: Example of no finite memory for different objectives}}
\label{Apendix: Proof of Proposition, no finite memory}

We will show that for the limsup-average and limsup objectives there is a blind MDP where there is no $\eps$-optimal finite-memory strategy. For both cases, the example is constructed with the following idea in mind. To achieve the optimal value, the decision-maker needs to play an action $a_1$ for some period, then play another action $a_2$ and repeat the process. The key is to require that the length of the period gets longer as the game progresses. This kind of strategy can not be achieved with finite-memory strategies. 

For the limsup-average objective, the blind MDP example is due to Venel and Ziliotto~\cite{VZ20} and is presented below.

\begin{Example}
    Consider two states $k_0$ and $k_1$ and the player receives a reward only when the state is $k_1$. To reach $k_1$, the decision-maker can play action \emph{change} and move between the two states. By playing action \emph{wait}, the state does not change. 
\end{Example}
    
Figure \ref{Figure: Finite recall is not enough for POMDPs} is a representation of the game.
    
\begin{figure}[h] 
\begin{subfigure}{0.5\textwidth}
\centering
    \begin{tikzpicture}[->,>=stealth',shorten >=1pt,auto,node distance=3cm, semithick]
    
    \node[state] (k0) [rect] {$k_0 | 0$};
    \node[state] (k1) [rect, right of = k0] {$k_1 | 1$};    
    
    \path   
    (k0) edge[bend left] node {$1$} (k1)
    (k1) edge[bend left] node {$1$} (k0);
    
    \end{tikzpicture}
    \caption{Action \emph{change}}
\end{subfigure}
\begin{subfigure}{0.5\textwidth}
\centering
    \begin{tikzpicture}[->,>=stealth',shorten >=1pt,auto,node distance=3cm, semithick]
    
    \node[state] (k0) [rect] {$k_0 | 0$};
    \node[state] (k1) [rect, right of = k0] {$k_1 | 1$};    
    
    \path   
    (k0) edge[loop above] node {$1$} (k1)
    (k1) edge[loop above] node {$1$} (k0);
    
    \end{tikzpicture}
    \caption{Action \emph{wait}}    
    
\end{subfigure}        
\caption{Finite recall is not enough for POMDPs}
\label{Figure: Finite recall is not enough for POMDPs}
\end{figure}

Consider the initial belief $p_1 = \frac{1}{2} \cdot \delta_{k_0} + \frac{1}{2} \cdot \delta_{k_1}$, the uniform distribution. It is easy to see that finite-memory strategies (or equivalently finite-recall strategies) can not achieve more than $1/2$. On the other hand, the value of this game with the \emph{limsup-average} objective is $1$, and is guaranteed by the following strategy: 
    \[
    \sigma = (wait)^{2^{0^2}} (change) (wait)^{2^{1^2}} \cdots (change)(wait)^{2^{N^2}} \cdots 
    \]
Hence, finite-memory strategies do no guarantee any approximation for POMDPs with \emph{limsup-average} objective.

For the \emph{limsup} objective, the blind MDP example is the following.

\begin{Example}
    Consider four states ($k_0, k_1, k_2$ and $k_3$) and two actions: \emph{wait} ($w$) and \emph{change} ($c$). The initial state is $k_1$ and players know it. In $k_1$, if $w$ is played, then the state moves to $k_2$ with probability $1/2$ and stays with probability $1/2$; if $c$ is played, then the absorbing state $k_0$ is reached. From state $k_2$, if we play $w$, we stay in the same state; if we play $c$, we move to state $k_3$. From $k_3$, the only state that has a positive reward, if we play any action, we return to the initial state $k_1$. 
\end{Example}
    
Figure \ref{Figure: Finite-memory is not enough for limsup objective} is a representation of the game.

\begin{figure}[h] 
\begin{subfigure}{0.5\textwidth}
\centering
    \begin{tikzpicture}[->,>=stealth',shorten >=1pt,auto,node distance=3cm, semithick]
    
    \node[state] (k0) [rect] {$k_0 | 0$};
    \node[state] (k1) [rect, right of = k0] {$k_1 | 0$};    
    \node[state] (k2) [rect, below of = k1] {$k_2 | 0$};
    \node[state] (k3) [rect, left of = k2] {$k_3 | 1$};    
    
    \path   
    (k0) edge [loop left] node {$1$} (k0)
    (k1) edge node {$1$} (k0)
    (k2) edge node {$1$} (k3)
    (k3) edge node {$1$} (k1);
    
    \end{tikzpicture}
    \caption{Action \emph{change}}
\end{subfigure}
\begin{subfigure}{0.5\textwidth}
\centering
    \begin{tikzpicture}[->,>=stealth',shorten >=1pt,auto,node distance=3cm, semithick]
    
    \node[state] (k0) [rect] {$k_0 | 0$};
    \node[state] (k1) [rect, right of = k0] {$k_1 | 0$};    
    \node[state] (k2) [rect, below of = k1] {$k_2 | 0$};
    \node[state] (k3) [rect, left of = k2] {$k_3 | 1$};
    
    \path   
    (k0) edge [loop left] node {$1$} (k0)
    (k1) edge [loop left] node {$1/2$} (k1)
    (k1) edge node {$1/2$} (k2)
    (k2) edge [loop left] node {$1$} (k2)
    (k3) edge node {$1$} (k1);
    
    \end{tikzpicture}
    \caption{Action \emph{wait}}    
\end{subfigure}        
\caption{Finite-memory is not enough for \emph{limsup} objective}
\label{Figure: Finite-memory is not enough for limsup objective}
\end{figure}

In this blind MDP (see \cite{BG09,BGB12}), for the \emph{limsup} objective, for any $\eps > 0$, there is an infinite-memory strategy that guarantees $1 - \eps$, so the value of the game is $1$. On the other hand, applying any finite-memory strategy (or equivalently finite-recall strategies) yields a \textit{limsup }reward of $0$. Hence, finite-memory strategies do no guarantee any approximation for POMDPs with \emph{limsup} objective.

\subsubsection{Proof of Proposition \ref{Proposition: discontinuity}}

We will show that there is a POMDP with discontinuous value with respect to the transition function. The idea is to have two possible scenarios where signals are slightly different. By analyzing a long sequence of signals, the player is able to identify which is the scenario of the current state and so take a better strategy. The following example considers a transition function parameterized by $\eps \ge 0$.

\begin{Example}
\label{Example: Discontinuity}
	Consider two states $(k_u, k_d)$ and three actions: \emph{up} ($a_u$), \emph{down} ($a_d$) and \emph{wait} ($a_w$). Signals are relevant only for action $a_w$: under a non-symmetric transition function, they inform about the underlying state. More concretely, there are two signals $s_u$ and $s_d$. Playing actions $a_u$ or $a_d$ will give signal $s_u$ or $s_d$ respectively, adding no information. Playing action $a_w$ leads to signals $s_u$ and $s_d$ with slightly different probabilities if the state is $k_u$ or $k_d$. In terms of actions and rewards, $a_u$ leads to positive reward only if the state is $k_u$, similarly, $a_d$ leads to positive reward only if the state is $k_d$. Finally, $a_w$ leads to null reward in both states. Figure~\ref{Figure: Discontinuity} is a representation of this POMDP where transitions are specified.
\end{Example}

\begin{figure}[h]
\begin{subfigure}[b]{.34\textwidth}
\centering
    \begin{tikzpicture}[->,>=stealth',shorten >=1pt,auto,node distance=3cm, semithick]
    
    \node[state] (ku) [rect] {$k_u | 0$};
    \node[state] (kd) [rect, below of = ku, yshift = 20] {$k_d | 0$};
    
    \path   
    (ku) edge [loop left] node {$1/2; s_u$} (ku)
    (ku) edge [loop right] node {$1/2; s_d$} (ku)
    (kd) edge [loop left] node {$1/2 - \eps; s_u$} (kd)
    (kd) edge [loop right] node {$1/2 + \eps; s_d$} (kd);    
    \end{tikzpicture}
    \caption{Action $a_w$}
\end{subfigure}
\begin{subfigure}[b]{.32\textwidth}
\centering
    \begin{tikzpicture}[->,>=stealth',shorten >=1pt,auto,node distance=3cm, semithick]
    
    \node[state] (ku) [rect] {$k_u | 1$};
    \node[state] (kd) [rect, below of = ku, yshift = 20] {$k_d | 0$};
    
    \path   
    (ku) edge [loop left] node {$1; s_u$} (ku)
    (kd) edge [loop left] node {$1; s_u$} (kd);    
    \end{tikzpicture}
    \caption{Action $a_u$}
\end{subfigure}
\begin{subfigure}[b]{.32\textwidth}
\centering
    \begin{tikzpicture}[->,>=stealth',shorten >=1pt,auto,node distance=3cm, semithick]
    
    \node[state] (ku) [rect] {$k_u | 0$};
    \node[state] (kd) [rect, below of = ku, yshift = 20] {$k_d | 1$};
    
    \path   
    (ku) edge [loop left] node {$1; s_d$} (ku)
    (kd) edge [loop left] node {$1; s_d$} (kd);    
    \end{tikzpicture}
    \caption{Action $a_d$}
\end{subfigure}
\caption{Discontinuous value POMDP}
\label{Figure: Discontinuity}
\end{figure}

Consider an initial belief $p_1 = 1/2 \cdot \delta_{k_u} + 1/2 \cdot \delta_{k_d}$. If $\eps = 0$, the value is $1/2$, achieved for example by the constant strategy $\sigma \equiv a_u$. Note that playing action $a_w$ leads to no information since the random signal the player receives is independent of the underlying state. In contrast, if $\eps > 0$, playing action $a_w$ reveals information about the underlying state. If action $a_w$ is played sufficiently many times, the player can estimate the state by comparing the number of signals $s_d$ against $s_u$: if $s_d$ appear more than $s_u$, then it is more probable that the underlying state is $k_d$. Therefore, by playing $a_w$, the player can estimate the state with increasing probability. That is why the value for $\eps > 0$ is $1$. This proves that this POMDP is discontinuous with respect to the transition function since
	\[
    1 = \liminf_{\eps \to 0} v_{\infty}(p_1 \mid \eps) 
        >  v_{\infty}(p_1 \mid \eps = 0) = \frac{1}{2} \,.
	\] 

\subsection{Illustrative examples}
\label{Sec: Illustrative examples}

We show an example of POMDP that will be analyzed in Section \ref{Sec: Illustration} in light of our technique. This example comes in two variants differing in sophistication. 

\subsubsection{Simple version}

Let us explain the easiest version.

\begin{restatable}{example}{simple}
\label{Example: Simple POMDP}
    Consider two states ($k_u, k_d$) and two actions: \emph{up} ($a_u$) and \emph{down} ($a_d$). All transitions are possible (including loops) and they do not depend on the action. Signals inform the player when the state changes. In terms of actions and rewards, by playing $a_u$ the player obtains a reward of $1$ only if the current state is $k_u$. Similarly, by playing $a_d$ the player obtains a reward of $1$ only if the state is $k_d$. Figure \ref{Figure: Simple POMDP} is a representation of the game with specific transition probabilities.
\end{restatable}

\begin{figure}[h]
\begin{subfigure}{.5\textwidth}
\centering
    \begin{tikzpicture}[->,>=stealth',shorten >=1pt,auto,node distance=3cm, semithick]
    
    \node[state] (ku) [rect] {$k_u | 1$};
    \node[state] (kd) [rect, below right of = ku] {$k_d | 0$};
    
    \path   
    (ku) edge [bend left] node {$1/2; s_c$} (kd)
    (ku) edge [loop left] node {$1/2; s_w$} (ku)
    (kd) edge [bend left] node {$1/2; s_c$} (ku)
    (kd) edge [loop right] node {$1/2; s_w$} (kd);    
    \end{tikzpicture}
    \caption{Action $a_u$}
\end{subfigure}
\begin{subfigure}{.5\textwidth}
\centering
    \begin{tikzpicture}[->,>=stealth',shorten >=1pt,auto,node distance=3cm, semithick]
    
    \node[state] (ku) [rect] {$k_u | 0$};
    \node[state] (kd) [rect, below right of = ku] {$k_d | 1$};
    
    \path   
    (ku) edge [bend left] node {$1/2; s_c$} (kd)
    (ku) edge [loop left] node {$1/2; s_w$} (ku)
    (kd) edge [bend left] node {$1/2; s_c$} (ku)
    (kd) edge [loop right] node {$1/2; s_w$} (kd);    
    \end{tikzpicture}
    \caption{Action $a_d$}
\end{subfigure}
\caption{Simple POMDP}
\label{Figure: Simple POMDP}
\end{figure}

Consider an initial belief $p_1 = 1/4 \cdot \delta_{k_u} + 3/4 \cdot \delta_{k_d}$. During a play, the decision-maker can have two beliefs, $1/4 \cdot \delta_{k_u} + 3/4 \cdot \delta_{k_d}$ or $3/4 \cdot \delta_{k_u} + 1/4 \cdot \delta_{k_d}$, because the signals notify when there has been a change. The value of this game is $3/4$. An optimal strategy is to play action $a_d$ until getting a signal $s_c$, then playing action $a_u$ until getting a signal $s_c$, and repeat.

\subsubsection{Involved version}

Let us go to the more complex version. Now the transition between the two extremes includes more states, instead of being a direct jump.

\begin{restatable}{example}{involved}
\label{Example: Involved version}
    Consider six states and four actions: \emph{up} ($a_u$), \emph{down} ($a_d$), \emph{left} ($a_l$) and \emph{right} ($a_r$). States can be separated into two groups: \emph{extremes} ($k_u$ and $k_d$) and \emph{transitional} ($k_{l_1}, k_{r_1}, k_{l_2}$ and $k_{r_2}$). Furthermore, transitional states can be divided into two groups: \emph{left states} ($k_{l_1}$ and $k_{l_2}$) and \emph{right states} ($k_{r_1}$ and $k_{r_2}$). Transitions are from extreme states to transitional states and from transitional to extremes. More precisely, excluding loops, only the following transitions are possible: from $k_u$ to either $k_{l_1}$ or $k_{r_1}$, then from these two to $k_d$, from $k_d$ to either $k_{l_2}$ or $k_{r_2}$ and then back to $k_u$. Signals are such that the player knows: (i) the state changed to an extreme state, or (ii) the state changed to a transitional state and the new state is with higher probability a left state or a right state. In terms of actions and rewards, each action has an associated set of states in which the reward is $1$ and the rest is $0$: by playing $a_u$ the reward is $1$ only if the current state is $k_u$, playing $a_d$ rewards only state $k_d$, $a_l$ rewards states $k_{l_1}$ and $k_{l_2}$, and $a_r$ rewards states $k_{r_1}$ and $k_{r_2}$. Figure \ref{Figure: Complex POMDP} is a representation of this game with specific transition probabilities.
\end{restatable}
Consider an initial belief $p_1 = 1/4 \cdot \delta_{k_u} + 3/4 \cdot \delta_{k_d}$. The value of the game is $21/32$. An optimal strategy is given by playing action $a_d$ until getting a signal $s_l$ or $s_r$. If the decision-maker got signal $s_l$, then play action $a_l$, otherwise, play action $a_r$. Repeat action $a_l$ or $a_r$ until getting the signal $s_c$. Then, play $a_u$ until getting a signal $s_l$ or $s_r$. When this happens, play $a_l$ or $a_r$ accordingly until getting signal $s_c$. And so, repeat the cycle. 

The belief dynamic under this optimal strategy is the following. The initial belief is $p_1$, supported in the extreme states. By getting a signal $s_w$, the belief does not change. By getting signal $s_l$, the weight on $k_u$ distributes between states $k_{l_1}$ and $k_{r_1}$ in a proportion $3:1$ and the weight on $k_d$ distributes between $k_{l_2}$ and $k_{r_2}$ in the same way. By getting signal $s_r$, the distribution is similar, but the role of left states are interchanged with right states. Once the belief is in the transitional states, by playing the respective action (either $a_l$ or $a_r$), the belief does not change while receiving signal $s_w$. Upon receiving the signal $s_c$, the new belief is $3/4 \cdot \delta_{k_u} + 1/4 \cdot \delta_{k_d}$. By symmetry of the POMDP, the dynamic is then similar until getting signal $s_c$ for a second time. At that time, the belief is equal to the initial distribution, namely $1/4 \cdot \delta_{k_u} + 3/4 \cdot \delta_{k_d}$.

\begin{Remark}
For the decision-maker to have a finite-memory strategy, some quantity with finitely many options must be updated over time. A tentative idea is to compute the posterior belief, but it can take infinitely many values. In this example, using a belief partition is enough to encode an optimal strategy. In general, it is an open question if a belief partition is sufficient to achieve $\eps$-optimal strategies.
\end{Remark}

\begin{figure}[H]
\begin{subfigure}{.5\textwidth}
\centering
    \begin{tikzpicture}[->,>=stealth',shorten >=1pt,auto,node distance=3cm, semithick]
    
    \node[state] (ku) [rect] {$k_u | 1$};
    \node[state] (kl1) [rect, below left of = ku, yshift = 20] {$k_{l_1} | 0$};
    \node[state] (kr1) [rect, below right of = ku, yshift = 20] {$k_{r_1} | 0$};
    \node[state] (kd) [rect, below of = ku, yshift = -40] {$k_d | 0$};
    \node[state] (kl2) [rect, below left of = kd, yshift = 20] {$k_{l_2} | 0$};
    \node[state] (kr2) [rect, below right of = kd, yshift = 20] {$k_{r_2} | 0$};
    
    \path   
    (ku) edge [bend left = 50, above, sloped] node {$1/9; s_r$} (kl1)
    (ku) edge [bend right = 50, above, sloped] node {$3/9; s_r$} (kr1)
    (ku) edge [bend right = 60, above, sloped] node {$3/9; s_l$} (kl1)
    (ku) edge [bend left = 60, above, sloped] node {$1/9; s_l$} (kr1)
    (kd) edge [bend left = 50, above, sloped] node {$1/9; s_r$} (kl2)
    (kd) edge [bend right = 50, above, sloped] node {$3/9; s_r$} (kr2)
    (kd) edge [bend right = 60, below, sloped] node {$3/9; s_l$} (kl2)
    (kd) edge [bend left = 60, below, sloped] node {$1/9; s_l$} (kr2)
    (ku) edge [loop above] node {$1/9; s_w$} (ku)
    (kl1) edge [loop below] node {$1; s_w$} (kl1)
    (kr1) edge [loop below] node {$1; s_w$} (kr1)
    (kl2) edge [loop below] node {$1; s_w$} (kl2)
    (kr2) edge [loop below] node {$1; s_w$} (kr2)
    (kd) edge [loop above] node {$1/9; s_w$} (kd)
    ;    
    \end{tikzpicture}
    \caption{Action $a_u$}
\end{subfigure}
\begin{subfigure}{.5\textwidth}
\centering
    \begin{tikzpicture}[->,>=stealth',shorten >=1pt,auto,node distance=3cm, semithick]
    
    \node[state] (ku) [rect] {$k_u | 0$};
    \node[state] (kl1) [rect, below left of = ku, yshift = 20, xshift = 10] {$k_{l_1} | 1$};
    \node[state] (kr1) [rect, below right of = ku, yshift = 20, xshift = -10] {$k_{r_1} | 0$};
    \node[state] (kd) [rect, below of = ku, yshift = -40] {$k_d | 0$};
    \node[state] (kl2) [rect, below left of = kd, yshift = 20, xshift = 10] {$k_{l_2} | 1$};
    \node[state] (kr2) [rect, below right of = kd, yshift = 20, xshift = -10] {$k_{r_2} | 0$};
    
    \path   
    (kl1) edge [bend left = 25, below, sloped] node {$1/2; s_c$} (kd)
    (kr1) edge [bend right = 25, above, sloped] node {$\quad 1/2; s_c$} (kd)
    (kl2) edge [bend left = 75, above, sloped] node {$1/2; s_c$} (ku)
    (kr2) edge [bend right = 75, above, sloped] node {$1/2; s_c$} (ku)
    (ku) edge [loop above] node {$1; s_w$} (ku)
    (kl1) edge [loop below] node {$1/2; s_w$} (kl1)
    (kr1) edge [loop below] node {$1/2; s_w$} (kr1)
    (kl2) edge [loop below] node {$1/2; s_w$} (kl2)
    (kr2) edge [loop below] node {$1/2; s_w$} (kr2)
    (kd) edge [loop left] node {$1; s_w$} (kd)
    ;
    \end{tikzpicture}
    \caption{Action $a_l$}
\end{subfigure}
\begin{subfigure}{.5\textwidth}
\centering
    \begin{tikzpicture}[->,>=stealth',shorten >=1pt,auto,node distance=3cm, semithick]
    
    \node[state] (ku) [rect] {$k_u | 0$};
    \node[state] (kl1) [rect, below left of = ku, yshift = 20, xshift = 10] {$k_{l_1} | 0$};
    \node[state] (kr1) [rect, below right of = ku, yshift = 20, xshift = -10] {$k_{r_1} | 1$};
    \node[state] (kd) [rect, below of = ku, yshift = -40] {$k_d | 0$};
    \node[state] (kl2) [rect, below left of = kd, yshift = 20, xshift = 10] {$k_{l_2} | 0$};
    \node[state] (kr2) [rect, below right of = kd, yshift = 20, xshift = -10] {$k_{r_2} | 1$};
    
    \path   
    (kl1) edge [bend left = 25, above, sloped] node {$1/2; s_c \quad$} (kd)
    (kr1) edge [bend right = 25, below, sloped] node {$1/2; s_c$} (kd)
    (kl2) edge [bend left = 75, above, sloped] node {$1/2; s_c$} (ku)
    (kr2) edge [bend right = 75, above, sloped] node {$1/2; s_c$} (ku)
    (ku) edge [loop above] node {$1; s_w$} (ku)
    (kl1) edge [loop below] node {$1/2; s_w$} (kl1)
    (kr1) edge [loop below] node {$1/2; s_w$} (kr1)
    (kl2) edge [loop below] node {$1/2; s_w$} (kl2)
    (kr2) edge [loop below] node {$1/2; s_w$} (kr2)
    (kd) edge [loop right] node {$1; s_w$} (kd)
    ;
    \end{tikzpicture}
    \caption{Action $a_r$}
\end{subfigure}
\begin{subfigure}{.5\textwidth}
\centering
    \begin{tikzpicture}[->,>=stealth',shorten >=1pt,auto,node distance=3cm, semithick]
    
    \node[state] (ku) [rect] {$k_u | 0$};
    \node[state] (kl1) [rect, below left of = ku, yshift = 20] {$k_{l_1} | 0$};
    \node[state] (kr1) [rect, below right of = ku, yshift = 20] {$k_{r_1} | 0$};
    \node[state] (kd) [rect, below of = ku, yshift = -40] {$k_d | 1$};
    \node[state] (kl2) [rect, below left of = kd, yshift = 20] {$k_{l_2} | 0$};
    \node[state] (kr2) [rect, below right of = kd, yshift = 20] {$k_{r_2} | 0$};
    
    \path   
    (ku) edge [bend left = 50, above, sloped] node {$1/9; s_r$} (kl1)
    (ku) edge [bend right = 50, above, sloped] node {$3/9; s_r$} (kr1)
    (ku) edge [bend right = 60, above, sloped] node {$3/9; s_l$} (kl1)
    (ku) edge [bend left = 60, above, sloped] node {$1/9; s_l$} (kr1)
    (kd) edge [bend left = 50, above, sloped] node {$1/9; s_r$} (kl2)
    (kd) edge [bend right = 50, above, sloped] node {$3/9; s_r$} (kr2)
    (kd) edge [bend right = 60, below, sloped] node {$3/9; s_l$} (kl2)
    (kd) edge [bend left = 60, below, sloped] node {$1/9; s_l$} (kr2)
    (ku) edge [loop above] node {$1/9; s_w$} (ku)
    (kl1) edge [loop below] node {$1; s_w$} (kl1)
    (kr1) edge [loop below] node {$1; s_w$} (kr1)
    (kl2) edge [loop below] node {$1; s_w$} (kl2)
    (kr2) edge [loop below] node {$1; s_w$} (kr2)
    (kd) edge [loop above] node {$1/9; s_w$} (kd)
    ;    
    \end{tikzpicture}
    \caption{Action $a_d$}
\end{subfigure}
\caption{Complex POMDP}
\label{Figure: Complex POMDP}
\end{figure}

\clearpage


\section{Structure of the proof}
\label{Sec: Structure of the proof}
\label{sec: Proof of main Theorem}

In this section, we introduce two key lemmas and derive from them the proof of Theorem \ref{theo: POMDP}. We first define the history at stage $m$, which is all the information the decision-maker has at stage $m$.
\begin{Definition}[$m$-stage history]
Given a strategy $\sigma \in \Sigma$ and an initial belief $p_1$, denote the (random) history at stage $m$ by 
	\[
	H_m \defas ((A_1, S_1), (A_2, S_2), \ldots, (A_{m - 1}, S_{m - 1})) \,.
	\]
The random variable $H_m$ takes values in $\mc{H}_m = (\mc{A} \times \mc{S})^{m - 1}$.
\end{Definition}
Recall that we denote the state at stage $m$ by $K_m$, which takes values in $\mc{K}$; the signal at stage $m$ by $S_m$, which takes values in $\mc{S}$; and the action at stage $m$ by $A_m$, which takes values in $\mc{A}$. Note that the history at stage $m$ does not contain direct information about the states $K_1, \ldots, K_m$.

The belief of the player at stage $m$ plays a key role in the study of POMDPs, and we formally define it as follows.
\begin{Definition}[$m$-stage belief]
Given a strategy $\sigma \in \Sigma$ and an initial belief $p_1$, denote the belief at stage $m$ by $P_m$, which is given by, for all $k \in \mc{K}$,
    \[
    P_m(k) \defas \m{P}^{p_1}_{\sigma}(K_m = k \mid H_m) \,.
    \]
\end{Definition}

For fixed $\sigma$ and $p_1$, one can use Bayes rule to compute $P_m$. To avoid heavy notations, we omit the dependence of $P_m$ on $\sigma$ and $p_1$. For $p \in \Delta(\mc{K})$, denote the support of $p$ by $\Supp(p)$, which is the set of $k \in \mc{K}$ such that $p(k) > 0$.

The first ingredient of the proof of Theorem \ref{theo: POMDP} is the following lemma.
\begin{Lemma}
\label{Lemma: VZ16}
For any initial belief $p_1$ and $ \eps > 0$, there exists $m_\eps \geq 1$, $\sigma^{\eps} \in \Sigma$ and a (random) belief $P^* \in \Delta(\mc{K})$ (which depends on the history before stage $m_\eps$) such that:
\begin{enumerate}

    \item 
        \[
        \PP^{p_1}_{\sigma^{\eps}}(\left\|P_{m_\eps} - P^* \right\|_1 \leq \eps) \ge 1 - \eps \,.
        \]
    
    \item 
    \label{ergodic}
    There exists $\sigma \in \Sigma$, which depends on $P^*$, such that for all $k \in \Supp(P^*)$
    \begin{equation*} 
        \left( \frac{1}{n} \sum_{m=1}^n G_m \right) \xrightarrow[n \to \infty]{} \gamma^k_{\infty}(\sigma)  \quad \m{P}^k_{\sigma}-a.s. 
    \end{equation*}
    Moreover, $\gamma_{\infty}^{P^*}(\sigma) = v_{\infty}(P^*)$ and 
    $\EE^{p_1}_{\sigma^\eps}(v_{\infty}(P^*)) \geq v_{\infty}(p_1)-\eps$. 
\end{enumerate}
\end{Lemma}

This result is a consequence of Venel and Ziliotto \cite[Lemma 33]{VZ16}. This previous work states the existence of elements $\mu^* \in \Delta(\Delta(\mc{K}))$ and $\sigma^* \in \Delta(\Sigma)$ with similar properties to those of $P^* \in \Delta(\mc{K})$ and $\sigma \in \Sigma$. In this sense, the present lemma can be seen as a deterministic version of this previous result. 
To focus on the new tools we introduce in this paper to prove Theorem \ref{theo: POMDP}, we relegate the proof and the explanation of the differences between the two lemmata to Appendix \ref{App: Lemma VZ16}.

\begin{Remark}
The first property of Lemma \ref{Lemma: VZ16} follows immediately from \cite[Lemma 33]{VZ16} by the type of convergence in this previous result. On the other hand, the second property requires the introduction of a certain Markov chain on $\mc{K} \times \mc{A} \times \Delta(\mc{K})$. This Markov chain is already present in the work \cite{VZ16} but was used for other purposes. 
Therefore, the proof consists mainly of recalling previous results and constructions. 
\end{Remark}

\begin{Remark}
Note that $\m{P}^k_{\sigma}$ represents the law on plays induced by the strategy $\sigma$, conditional on the fact that the initial state is $k$. This does not mean that we consider the decision-maker to know $k$. In the same fashion, $\gamma^k_{\infty}(\sigma)$ is the reward given by the strategy $\sigma$, conditional on the fact that the initial state is $k$. Even though $\sigma$ is optimal in $\Gamma(P^*)$, this does not imply that $\sigma$ is optimal in $\Gamma(\delta_k)$: 
we may have $\gamma^k_{\infty}(\sigma)<v_{\infty}(\delta_k)$. 
\end{Remark}

The importance of Lemma \ref{Lemma: VZ16} comes from the fact that the average rewards converge almost surely to a limit that only depends on the initial state $k$. Intuitively, this result means that for any initial belief $p_1$, after a finite number of stages, we can get $\eps$-close to a belief $P^*$ such that the optimal reward from $P^*$ is, in expectation, almost the same as from $p_1$, and moreover from $P^*$ there exists an optimal strategy that induces a strong ergodic behavior on the state dynamics. Thus, there is a natural way to build a $3\eps$-optimal strategy $\tilde{\sigma}$ in $\Gamma(p_1)$: first, apply the strategy $\sigma^{\eps}$ for $m_\eps$ stages, then apply $\sigma$. Since after $m_\eps$ steps the current belief $P_{m_\eps}$ is $\eps$-close to $P^*$ with probability higher than $1-\eps$, the reward from playing $\tilde{\sigma}$ is at least the expectation of $\gamma^{P^*}_{\infty}(\sigma) - 2\eps$, which is greater than $v_{\infty}(p_1) - 3\eps$. Therefore, this procedure yields a $3 \eps$-optimal strategy. Nonetheless, $\sigma$ may not have finite memory, and thus $\tilde{\sigma}$ may not have either. The main difficulty of the proof is to transform $\sigma$ into a finite-memory strategy. We formalize this discussion below. 
\begin{Definition}[ergodic strategy]
    Let $p^* \in \Delta(\mc{K})$. We say that a strategy $\sigma$ is \emph{ergodic} for $p^*$ if the following holds for all $k \in \Supp(p^*)$
        \[
        \left( \frac{1}{n} \sum_{m=1}^n G_m \right) \xrightarrow[n \to \infty]{} \gamma^k_{\infty}(\sigma)  \quad \m{P}^k_{\sigma}-a.s. 
        \]
\end{Definition}
From the previous discussion, we aim at proving the following result.
\begin{Lemma} 
\label{Lemma: From ergodic to finite}
    Let $p^* \in \Delta(\mc{K})$ and $\sigma$ be an ergodic strategy for $p^*$. For all $\eps>0$, there exists a finite-memory strategy $\sigma'$ such that
\begin{equation*}
    \gamma^{p^*}_{\infty}(\sigma') \geq \gamma^{p^*}_{\infty}(\sigma) - \eps \,.
\end{equation*}
\end{Lemma}

This is our key lemma and the main technical contribution. The next section is devoted to explaining the technique used and proving it.

\begin{proof}[Proof of Theorem \ref{theo: POMDP} assuming Lemmas \ref{Lemma: From ergodic to finite} and \ref{Lemma: VZ16}]
Let $p_1$ be an initial belief and $\eps>0$. Let $m_\eps$, $\sigma^{\eps}$, $P^*$ and $\sigma$ be given by Lemma \ref{Lemma: VZ16}. 
Define the strategy $\sigma^0$ by: playing $\sigma^\eps$ until stage $m_\eps$, then switch to the  strategy $\sigma'$ given by Lemma \ref{Lemma: From ergodic to finite} for $\sigma$ and $p^*=P^*$. Note that $\sigma^0$ has finite memory. We have 
\begin{align*}
    \gamma^{p_1}_{\infty}(\sigma^0)
        &=\m{E}^{p_1}_{\sigma^\eps}\left(\gamma^{P_{m_\eps}}_{\infty}(\sigma')\right) 
        	&; \text{def } \sigma^0 \\
        &\geq \m{E}^{p_1}_{\sigma^\eps}\left( \gamma^{P^*}_{\infty}(\sigma') \right) - 2\eps 
        	&; \text{Lemma } \ref{Lemma: VZ16} \\
        &\geq \m{E}^{p_1}_{\sigma^\eps}\left( \gamma^{P^*}_{\infty}(\sigma) \right) - 3\eps 
        	&; \text{Lemma } \ref{Lemma: From ergodic to finite} \\
        &= \m{E}^{p_1}_{\sigma^\eps}\left( v_\infty(P^*) \right) - 3\eps 
        	&; \text{Lemma } \ref{Lemma: VZ16} \\
        &\geq v_{\infty}(p_1) - 4\eps 
        	&; \text{Lemma } \ref{Lemma: VZ16} \,,
\end{align*}
and the theorem is proved. 

\end{proof}

\section{Super-support and proof of Lemma \ref{Lemma: From ergodic to finite}}
\label{sec: Technique: Super-support}
In this entire section, fix $p^* \in \Delta(\mc{K})$, which will be used as an initial belief, and $\sigma$ an ergodic strategy for $p^*$.

\subsection{Notation}

For $a, b \in \RR$, denote the set $[a, b] \cap \ZZ$ by $[a \until b]$. 

\begin{Definition}[Value partition]
    Let $\sim$ be the equivalence relationship on $\Supp(p^*)$ defined by $k \sim k'$ if and only if $\gamma^k_\infty(\sigma) = \gamma^{k'}_\infty(\sigma)$. Let $\{\mc{K}_1, \ldots, \mc{K}_I\}$ be the corresponding \emph{value partition}. 
\end{Definition}

\begin{Definition}[Super-support]
    For $i \in [1 \until I]$ and $m \ge 0$, define, for all $h_m \in \mc{H}_m$,
        \[
        \mc{B}^i_m(h_m) \defas \bigcup_{k_1 \in \mc{K}_i} \{ k \in \mc{K} : \PP^{k_1}_\sigma( H_m = h_m ) > 0, \PP^{k_1}_\sigma(K_m = k | H_m = h_m) > 0 \} \,.
        \]
	In other words, $\mc{B}^i_m(h_m)$ is the set of all reachable states at stage $m$ starting from some state in $\mc{K}_i$, playing the strategy $\sigma$ and obtaining history $h_m$ (if $H_m = h_m$ is possible). Denote
        \[
        B^i_m \defas \mc{B}^i_m(H_m)\,,
        \]
	the random set associated with $H_m$, and $B_m  \defas  (B^1_m, \ldots, B^I_m)$ the \emph{super-support} at stage $m$.
\end{Definition}

\begin{Remark}
Note that
	\[
	\Supp(P_m) = \bigcup_{i \in [1 \until I]} B^i_m .
	\]
Therefore, the support of $P_m$ can be deduced from the super-support $B_m$. On the other hand, $B_m$ can not be deduced from $P_m$, and thus can not be deduced from the support of $P_m$. 
This justifies the vocabulary. 
\end{Remark}

We will build a finite-memory $\eps$-optimal strategy that plays by blocks. Each block has fixed finite length and, within each block, the strategy depends only on the history in the block and on the super-support at the beginning of the block. At the end of the block, the automaton computes the new super-support according to the block history and the previous super-support. Thus, the only difference with a bounded recall strategy is that our strategy keeps track of the super-support. Super-support is a type of origin information: it is related to the value partition, and therefore to where the current mass distribution comes from.

\begin{Definition}[$h_m$-shift]
    Let $m \geq 1$ and $h_m \in \mathcal{H}_m$. The \emph{$h_m$-shift} of $\sigma$ is the strategy $\sigma[h_m]$ defined by, for all $m' \geq 1$,     
        \[
        \sigma[h_m](h_{m'})  \defas  \sigma(h_m,h_{m'}) \,.
        \]
    We denote $\sigma_m \defas \sigma[H_m]$, the corresponding random shift at stage $m$.
\end{Definition}

In other words, $\sigma[h_m]$ corresponds to the continuation of the strategy $\sigma$ conditional on the fact that the history of the first $m$ stages was $h_m$. 

\subsection{Illustration}
\label{Sec: Illustration}

The super-support captures specific information related to the beginning of the game: the origin of the current mass distribution (given by $P_m$) in terms of the initial value partition $(\mc{K}_i)_{i \in [1 \until I]}$. There are finitely many possible super-supports and it is possible to keep track of the current super-support using Bayesian updating. Therefore, it is a good variable to be used in finite-memory strategies.

Let us recall our simple example of a POMDP, Example \ref{Example: Simple POMDP}.

\simple*

Finite-recall is enough to approximate the value of this POMDP: the decision-maker can recall the last action. Then, upon seeing the signal $s_c$, the player has to change actions. Recall that $p_1 = 1/4 \cdot \delta_{k_u} + 3/4 \cdot \delta_{k_d}$. Therefore, an optimal strategy is given by playing $a_d$ until getting signal $s_c$, then playing $a_u$ until getting signal $s_c$ and repeat. This strategy is ergodic for $p_1$ and the corresponding value partition is given by $( \mc{K}_1 = \{k_u\}, \mc{K}_2 = \{k_d\} )$, because, if $K_1 = k_u$, the long-run reward is $0$ and, if $K_1 = k_d$, the long-run reward is $1$. In this case, the super-support describes completely the belief $P_m$ since it keeps track of which state has the highest (or lowest) probability.

Although the example is simple, we can already see the difference between support strategies and super-support strategies. In this case, all strategies based on the current support (the support of $P_m$) are constant and therefore can achieve a long-run reward of at most $1/2$. On the other hand, super-support strategies can be optimal and achieve a long-run reward of $3/4$.

This example also shows that playing by blocks and defining the behaviour in each block by the current support (instead of the super-support) is not enough.

Let us analyze now our more complex POMDP example, Example \ref{Example: Involved version}.

\involved*

Recall that $p_1 = 1/4 \cdot \delta_{k_u} + 3/4 \cdot \delta_{k_d}$ and that an optimal strategy is given by playing action $a_d$ until getting a signal $s_l$ or $s_r$. If the decision-maker gets signal $s_l$, then play action $a_l$, otherwise, play action $a_r$. Repeat action $a_r$ until getting the signal $s_c$. Then, play $a_u$ until getting a signal $s_l$ or $s_r$. When this happens, play $a_l$ or $a_r$ accordingly until getting signal $s_c$. And so, repeat the cycle. 

This optimal strategy is ergodic for $p_1$ and the corresponding value partition is given by $( \mc{K}_1 = \{k_u\}, \mc{K}_2 = \{k_d\} )$ because, if $K_1 = k_u$, the long-run reward is $0$ and, if $K_1 = k_d$, the long-run reward is $7/8$. 
Contrary to the previous example, the super-support does not describe completely the belief $P_m$. Indeed, consider the initial belief $p_1$, which is supported on the extreme states, and that the decision-maker gets either signals $s_l$ or $s_r$. Then, the new belief is supported in all the transitional states and the super-support is the same under any of these two histories, and equal to: $B = (B^1 = \{k_{l_1}, k_{r_1}\}, B^2 = \{k_{l_2}, k_{r_2}\})$. Based on this super-support one can not reconstruct the current belief, but one knows more than only the support: we can differentiate the origin ($k_u$ or $k_d$) of the current belief distribution.

Notice that using the super-support alone is not enough to get $\eps$-optimal strategies. Indeed, in transitional states, the decision-maker needs to know whether the state is more likely to be in a left state or a right state in order to play well, and the super-support does not contain such information. That is why, in the proof of Lemma \ref{Lemma: From ergodic to finite}, we consider a more sophisticated class of strategies, that combine super-support and bounded recall. For the moment, let us describe such a strategy for this example. Choose $n_0$ very large, and for each $\ell \geq 1$, play the following strategy in the time block $[\ell n_0+1 \until (\ell+1)n_0]$:
\begin{itemize}
\item[-] \emph{Case 1: the super-support at stage $\ell n_0+1$ is $( \{k_u\}, \{k_d\} )$}. Play the previous 0-optimal strategy, that is: play action $a_d$ until getting a signal $s_l$ or $s_r$. If the decision-maker gets signal $s_{l}$, then play action $a_{l}$, otherwise, play action $a_r$. Repeat action $a_r$ until getting the signal $s_c$. Then, play $a_u$ until getting a signal $s_{l}$ or $s_r$. When this happens, play $a_{l}$ or $a_r$ accordingly until getting signal $s_c$. And so, repeat the cycle. 
\item[-] \emph{Case 2: the super-support at stage $\ell n_0+1$ is $( \{k_d\}, \{k_u\} )$}. Play the same strategy as in Case 1, except that the roles of $a_u$ and $a_d$ are switched. 
\item[-] \emph{Case 3: the super-support at stage $\ell n_0+1$ is $( \{k_{l_1}, k_{r_1}\}, \{k_{l_2}, k_{r_2}\} )$}. 
Play $a_l$ (or $a_r$) until getting the signal $s_c$. At this point, the super-support is 
$( \{k_d\}, \{k_u\} )$. Then, play as in Case 2. 
\item[-] \emph{Case 4: the super-support at stage $\ell n_0+1$ is $( \{k_{l_2}, k_{r_2}\}, \{k_{l_1}, k_{r_1}\} )$}. 
Play $a_l$ (or $a_r$) until getting the signal $s_c$. At this point, the super-support is 
$( \{k_u\}, \{k_d\} )$. Then, play as in Case 1. 
\end{itemize}
This strategy is sub-optimal during the first phase of Case 3 and Case 4, until the decision-maker receives signal $s_c$. As $n_0$ grows larger and larger, this part becomes negligible. Thus, for any $\eps>0$, there exists $n_0$ such that this strategy is $\eps$-optimal (but not optimal). 

\subsection{Properties}

Now we can state properties of super-supports when the strategy $\sigma$ is ergodic for $p^*$ and explain how rich is the structure of the random sequence of beliefs $(P_m)_{m \geq 1}$. By definition of ergodic strategies, the map $k \mapsto \gamma^k_{\infty}(\sigma)$ is constant on $\mc{K}_i$, and we denote its value by $\gamma^i_\infty$.

\begin{Lemma}[Continuation value]
    \label{Lemma: Continuation value}
    For all $m \geq 1$ and $i \in [1 \until I]$ it holds $\PP^{p^*}_\sigma$-a.s. that 
        \[
        \forall k \in B^i_m \quad 
            \gamma^k_\infty(\sigma_m) = \gamma^i_\infty \,
        \]
    Consequently, $B^1_m, \ldots, B^I_m$ are disjoint $\PP^{p^*}_\sigma$-a.s. 
\end{Lemma}

\begin{proof}
    Let $i \in [1 \until I]$. Considering the law given by $\PP^{p^*}_\sigma$, fix a realization $K_m = k \in B^i_m$. By definition of super-support, there exists $k' \in \mc{K}_i \subseteq \Supp(p^*)$ such that $k$ can be reached from $k'$ in $m$ steps. Recall that, since $\sigma$ is ergodic for $p^*$, 
        \[
        \left( \frac{1}{n} \sum_{m'=m}^{m+n-1} G_{m'} \right)
            \xrightarrow[n \to \infty]{} \gamma^{k'}_{\infty}(\sigma)=\gamma^i_{\infty} \quad \PP^{k'}_{\sigma} - a.s. \,.
        \]
    In particular, the convergence holds when $K_m = k$. Then,
        \[
        \left( \frac{1}{n} \sum_{m'=1}^{n} G_{m'} \right)
            \xrightarrow[n \to \infty]{} \gamma^{k'}_{\infty}(\sigma)=\gamma^i_{\infty} \quad \PP^{k}_{\sigma_m} - a.s. \,,
        \]
    and therefore $\gamma^k_{\infty}(\sigma_m) = \gamma^i_{\infty}$.
\end{proof}

Another property of the super-support is concerned with consecutive conditioning and is fairly intuitive. We formally state it in the following lemma and show the proof for completeness.

\begin{Lemma}[Continuation super-support]
\label{Lemma: Continuation super-support}
    Let $i \in [1 \until I]$, $m, m' \geq 0$. For all realizations $H_{m + m'} = h = (h_m, h_{m'}) \in \mc{H}_{m + m'}$, denoting $\mc{C}  = \mc{B}^i_{m + m'}(h_{m + m'})$, we have that, for all $k \in \mc{B}^i_m(h_m)$, 
        \[
        \PP^{k}_{\sigma[h_m]}(K_{m'} \in \mc{C} | H_{m'} = h_{m'}) = 1 \,.
        \]
\end{Lemma}

In other words, the super-support that arises at stage $m+m'$, $B_{m + m'}$, coincides with the super-support that would arise from a two-step procedure: first, advancing $m$ stages; and then, applying the continuation of the strategy, $\sigma_m$, for $m'$ more stages.

\begin{proof} 
Fix a realization $H_{m + m'} = h = (h_m, h_{m'})$ and let $k \in \mc{B}^i_m(h_m)$. Recall that, by definition of super-support,
    \[
     \mc{B}^i_m (h_m)
        = \bigcup_{\substack{\bar{k}_1 \in \mc{K}_i \\\PP^{\bar{k}_1}_{\sigma}(h_m > 0)}} \Supp\left( \PP^{\bar{k}_1}_{\sigma}(K_m = \cdot \mid H_m = h_m) \right) \,.
    \]
Therefore, there exists $\bar{k}_1 \in \mc{K}_i$ such that $k \in \Supp( \PP^{\bar{k}_1}_{\sigma}(K_m = \cdot \mid H_m = h_m) )$. In particular, we have that $\PP^{\bar{k}_1}_{\sigma}( K_m = k ) > 0$. 

Consider $k'$ such that $\PP^{k}_{\sigma[h_m]}( K_{m'} = k' \mid H_{m'} = h_{m'}) > 0$. By a semi-group property, we deduce that
    \[
    \PP^{\bar{k}_1}_{\sigma}( K_{m + m'} = k' \mid H_{m + m'} = h ) > 0 \,,
    \]
which implies that $k' \in \mc{B}^i_m(h) = \mc{C}$, and thus the lemma is proved. 
\end{proof}

\begin{Remark}
This property does not depend on the fact that $\sigma$ is ergodic for $p^*$.
\end{Remark}

\subsection{Proof of Lemma \ref{Lemma: From ergodic to finite}}

Fix $p^* \in \Delta(\mc{K})$ such that $\sigma$ is ergodic for $p^*$. Note that, for all $m \geq 1$, $B_m \in \{(\mc{C}_1, \ldots, \mc{C}_I) : \mc{C}_1, \ldots, \mc{C}_I \subseteq \mc{K}\}$, which is a finite set. Denote all different super-supports that can occur with positive probability by $\mc{D}^1, \mc{D}^2, \ldots, \mc{D}^J$, i.e.
	\[
	\left\{ \mc{D}^1, \mc{D}^2, \ldots, \mc{D}^J \right\} \defas \bigcup_{m \geq 1} \Supp B_m \,.
	\]
Moreover, since $\mc{D}^j$ corresponds to a super-support that occurs at some stage and under some history, there exists $h^j$ and $m_j$ such that $h^j \in \mc{H}_{m_j}$ and $\mc{D}^j = \mc{B}^i_m(h^j)$. In other words, $\mc{D}^j$ is the realization
of the super-support at stage $m_j$ under history $h^j$ and $\left\{ \mc{D}^1, \mc{D}^2, \ldots, \mc{D}^J \right\}$ contains all super-supports that can occur.

\paragraph{Definition of the strategy $\sigma'$.}
Let $\eps > 0$. By Lemma \ref{Lemma: Continuation value}, there exists $n_0 \in \m{N}^*$ such that for all $i \in [1 \until I]$, $j \in [1 \until J]$ and $k \in \mc{D}^j_i$, 
    \[
    \m{E}^k_{\sigma[h^j]} \left(\frac{1}{n_0} \sum_{m=1}^{n_0} G_m\right) \geq \gamma^i_{\infty}-\eps \,.
    \]

Define the strategy $\sigma'$ by blocks, and characterize each block by induction. For each $\ell \geq 0$, the block number $\ell$ consists in the stages $m$ such that $\ell n_0+1 \leq m \leq (\ell + 1) n_0$. We characterize the behavior in block $\ell$ by a variable $J_\ell \in [1 \until J]$ in the following way. For stage $m$ inside block $\ell$, the strategy $\sigma'$ plays according to $J_\ell$ and the history between stages $\ell n_0 + 1$ and $m$. Each block is characterized by induction because the variable $J_\ell$ is computed at stage $\ell n_0 + 1$ according to $J_{(\ell - 1)}$ and the history in the last $n_0$ stages. Thus, $\sigma'$ can be seen as mapping from $\cup_{m = 1}^{n_0} \mc{H}_m \times [1 \until J]$ to $\mc{A}$. 

Consider $\ell = 0$, i.e. the first block. The strategy $\sigma'$ is defined on the first $n_0$ stages as follows. Consider the value partition $\{\mc{K}_1, \ldots, \mc{K}_I\}$ given by $p^*$ and $\sigma$. By definition of $\mc{D}^1, \mc{D}^2, \ldots, \mc{D}^J$, there exists $j \in [1 \until J]$ such that $B_1 = \mc{D}^j$. Set $J_0 = j$, and define $\sigma'(h, J_0)  \defas  \sigma(h^{J_0},h)$ for all $h \in \mathcal{H}_m$ and $m \leq n_0$. 

Let us proceed to the induction step. Consider $\ell \geq 1$ and assume that we have defined $J_{(\ell - 1)}$ and $\sigma'$ up to stage $\ell n_0$. Denote the history between stages $(\ell - 1) n_0 + 1$ and $\ell n_0 + 1$ by $h \in \mc{H}_{n_0 + 1}$ and define $J_\ell$ such that, for all $i \in [1 \until I]$,
    \[
    \mc{D}^{J_\ell}_i = \mc{B}^i_{m_{J_{(\ell - 1)}} + n_0 + 1}(h^{J_{(\ell - 1)}}, h) \,. 
    \]
Then, extend $\sigma'$ for $n_0$ additional stages as before: $\sigma'(h, J_\ell)  \defas  \sigma(h^{J_\ell}, h)$ for all $h \in \mathcal{H}_m$ and $m \leq n_0$. Thus, we have defined $J_\ell$ and extended $\sigma'$ up to stage $(\ell + 1) n_0$. 

To summarize our construction in words, during stages $\ell n_0 + 1, \ell n_0 + 2, \ldots, (\ell + 1) n_0$, the decision-maker plays as if he was playing $\sigma$ from history $h^{J_\ell}$. Notice that the indexes $J_0, J_1, \ldots$ depend on the history, and therefore are random. Now we will connect the strategy $\sigma'$ with the super-support given by $p^*$ and $\sigma$.
\begin{Lemma}
\label{lemma: POMDP proof}
    For all $ i \in [1 \until I]$, $k \in \mc{K}_i$ and $\ell \geq 0$, we have that
        \[
        \PP^{k}_{\sigma'}(K_{\ell n_0+1} \in \mc{D}^{J_\ell}_i) = 1 \,.    
        \]
    Consequently, $\mc{D}^{J_\ell} = (\mc{D}^{J_\ell}_1, \mc{D}^{J_\ell}_2, \ldots, \mc{D}^{J_\ell}_I)$ is a partition of the support of $P_{\ell n_0 + 1}$. Moreover,
        \[
        \m{P}^{p^*}_{\sigma'}(K_{\ell n_0 + 1} \in \mc{D}^{J_\ell}_i) = p^*(\mc{K}_i) \,.
        \]
\end{Lemma}

\begin{proof} 
Fix $i \in [1 \until I]$ and $k \in \mc{K}_i$. We will prove the result by induction on $\ell \geq 0$. For $\ell = 0$, $\mc{D}^{J_0} = (\mc{K}_1, \dots, \mc{K}_I)$ and $\m{P}^{k}_{\sigma'}( K_{1} \in \mc{D}^{J_0}_i = \mc{K}_i) = 1$. Thus, the result holds. 

Assume $\ell \geq 1$. Note that $H_{(\ell - 1) n_0 + 1}$ determines the value of $J_0, \ldots, J_{(\ell - 1)}$. Therefore, $\mc{D}^{J_{(\ell-1)}}_i$ is also determined by $H_{(\ell - 1) n_0 + 1}$. By induction hypothesis, 
    \[
    \PP^{k}_{\sigma'}(K_{(\ell - 1)n_0+1} \in \mc{D}^{J_{(\ell - 1)}}_i) = 1 \,.
    \]
We must prove that, under these circumstances, $K_{\ell n_0+1} \in \mc{D}^{J_\ell}_i$. 

Indeed, index $J_{\ell - 1}$ defines strategy $\sigma'$ for stages $\ell n_0 + 1, \ell n_0 + 2, \ldots, (\ell + 1) n_0$: the decision-maker will play according to $\sigma[h^{J_{(\ell - 1)}}]$. By playing $\sigma'$ during this block, a history $h \in \mathcal{H}_{n_0 + 1}$ will be collected. Let $m \defas J_{(\ell - 1)}$ and $m' \defas n_0 + 1$. By Lemma \ref{Lemma: Continuation super-support}, we have that, starting from $K_{(\ell - 1) n_0 + 1} \in \mc{D}^{J_{(\ell - 1)}}_i$, playing $\sigma[h^{J_{(\ell - 1)}}]$ during $n_0$ stages and collecting history $h \in \mathcal{H}_{n_0 + 1}$ leads to a state that, by definition of $J_\ell$, is in $\mc{D}^{J_\ell}_i$. Therefore, 
    \[
    \PP^{k}_{\sigma'}(K_{\ell n_0 + 1} \in \mc{D}^{J_\ell}_i) 
        \geq \PP^{k}_{\sigma'}(K_{(\ell - 1) n_0 + 1} \in \mc{D}^{J_{(\ell - 1)}}_i) = 1 \,,
    \]
which proves the first result. 

Now we know that the union of $\mc{D}^{J_\ell}_1, \mc{D}^{J_\ell}_2, \ldots, \mc{D}^{J_\ell}_I$ covers the support of $\m{P}^{p^*}_{\sigma'}( K_{\ell n_0 + 1} = \cdot)$. Moreover, by Lemma \ref{Lemma: Continuation value}, $\mc{D}^{J_\ell}_1, \mc{D}^{J_\ell}_2, \ldots, \mc{D}^{J_\ell}_I$ are disjoint. Since $(\mc{K}_1,\dots,\mc{K}_I)$ partitions the support of $p^*$, we have that, for all $i \in [1 \until I]$,
\begin{align*}
    \m{P}^{p^*}_{\sigma'}(K_{\ell n_0 + 1} \in \mc{D}^{J_\ell}_i) 
        &= \sum_{i' = 1}^I  \sum_{k \in \mc{K}_{i'}} p^*(k) \m{P}^{k}_{\sigma'}(K_{\ell n_0 + 1} \in \mc{D}^{J_\ell}_i) \\ 
        &= \sum_{i' = 1}^I  \sum_{k \in \mc{K}_{i'}} p^*(k) \1_{k \in \mc{K}_i} \\
        &= p^*(\mc{K}_i) \,,
\end{align*}
which proves the second property. 
\end{proof}

To finish the proof of Lemma \ref{Lemma: From ergodic to finite}, we must show that the finite-memory strategy $\sigma'$ guarantees the reward obtained by $\sigma$ up to $\eps$. The idea is that in each block we are playing some shift of $\sigma$ for $n_0$ stages. The shift is chosen so that information about the initial belief is correctly updated, while the number $n_0$ is chosen so that the expected average reward of the whole block is close to the expected limit average reward. Then, since all blocks have the same approximation error, the average considering all blocks yields approximately $\gamma^{p^*}_{\infty}(\sigma)$. This is the intuition behind the following lemma.

\begin{Lemma} \label{POMDP block}
Let $L \in \m{N}^*$. The following inequality holds:
    \begin{equation*}
    \m{E}^{p^*}_{\sigma'} \left(\frac{1}{L n_0} \sum_{m=1}^{L n_0} G_m \right) \geq \gamma^{p^*}_{\infty}(\sigma)-\eps.
    \end{equation*}
\end{Lemma}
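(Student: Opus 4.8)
The plan is to imitate the block-by-block averaging behind Lemma \ref{blind block}, but to carry it out conditionally on the history, since in the POMDP the belief process is random. The first observation is that the quantity on the right-hand side decomposes over the equivalence classes: by the almost-sure convergence in Lemma \ref{theo: VZ16} together with bounded convergence (the rewards lie in $[0,1]$), starting from $p^*$ the limit of the averages equals $\gamma^k_\infty(\sigma)$ on the event that the initial state is $k$, so
\[
\gamma^{p^*}_{\infty}(\sigma) = \sum_{k \in Supp(p^*)} p^*(k)\, \gamma^k_{\infty}(\sigma) = \sum_{i=1}^r p^*(K_i)\, \gamma^i_{\infty}.
\]
Since $\sum_{i} p^*(K_i) = 1$, it then suffices to show that the expected average reward in each single block of length $n_0$ is at least $\sum_i p^*(K_i)\gamma^i_\infty - \eps = \gamma^{p^*}_\infty(\sigma)-\eps$; averaging the $S$ blocks yields the statement.

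First I would fix a block index $s$ and condition on the history $h_{sn_0+1}$ accumulated up to the start of the block. This history determines both the index $j_s$ and the belief $p_{sn_0+1} = \m{P}^{p^*}_{\sigma'}(k_{sn_0+1} = \cdot \mid h_{sn_0+1})$. By construction of $\sigma'$, throughout the block $[|sn_0+1,(s+1)n_0|]$ the decision-maker plays $\sigma'(j_s,\cdot)=\sigma[h^{j_s}](\cdot)$ as a function of the in-block history alone; hence, conditioned on $k_{sn_0+1}=k$, the Markov property ensures that the law of the $n_0$ rewards collected in the block coincides with that of the first $n_0$ rewards under $\m{P}^k_{\sigma[h^{j_s}]}$. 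By the partition property of Lemma \ref{lemma: POMDP proof} the support of $p_{sn_0+1}$ is contained in the disjoint union $B^1_{j_s} \sqcup \cdots \sqcup B^r_{j_s}$, so I can split the conditional block reward according to which $B^i_{j_s}$ the state lies in and apply the defining inequality of $n_0$, namely $\m{E}^k_{\sigma[h^{j_s}]}\big(\tfrac{1}{n_0}\sum_{m=1}^{n_0} g_m\big) \geq \gamma^i_\infty - \eps$ for $k \in B^i_{j_s}$. This gives
\[
\m{E}^{p^*}_{\sigma'}\left(\frac{1}{n_0}\sum_{m=sn_0+1}^{(s+1)n_0} g_m \,\big|\, h_{sn_0+1}\right) \geq \sum_{i=1}^r p_{sn_0+1}(B^i_{j_s})\,(\gamma^i_\infty - \eps).
\]

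Next I would take the outer expectation and use the tower property. The only remaining ingredient is the value of $\m{E}^{p^*}_{\sigma'}\big[p_{sn_0+1}(B^i_{j_s})\big]$. Since $p_{sn_0+1}(B^i_{j_s}) = \m{P}^{p^*}_{\sigma'}(k_{sn_0+1}\in B^i_{j_s}\mid h_{sn_0+1})$, its expectation is the marginal $\m{P}^{p^*}_{\sigma'}(k_{sn_0+1}\in B^i_{j_s})$, which equals $p^*(K_i)$ by Lemma \ref{lemma: POMDP proof}. Substituting yields, for every $s$,
\[
\m{E}^{p^*}_{\sigma'}\left(\frac{1}{n_0}\sum_{m=sn_0+1}^{(s+1)n_0} g_m\right) \geq \sum_{i=1}^r p^*(K_i)\gamma^i_\infty - \eps = \gamma^{p^*}_\infty(\sigma) - \eps,
\]
and averaging these identical lower bounds over $s=0,\ldots,S-1$ proves the lemma.

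The step I expect to be the most delicate is the conditioning argument in the second paragraph: one must verify carefully that, given the start-of-block history $h_{sn_0+1}$, the index $j_s$ is determined and the in-block dynamics under $\sigma'$ genuinely reproduce $\m{P}^k_{\sigma[h^{j_s}]}$ on the first $n_0$ stages, so that the block reward can be evaluated state-by-state and the choice of $n_0$ applied. Everything else — the decomposition into blocks, the passage from conditional to marginal weights, and the final averaging — is bookkeeping once Lemma \ref{lemma: POMDP proof} is available. This is exactly what makes the super-support the correct object: it guarantees that the weight $p^*(K_i)$ carried by class $i$ is preserved at the start of every block irrespective of the random signals, which is what allows the conditional weights $p_{sn_0+1}(B^i_{j_s})$ to average out to $p^*(K_i)$.
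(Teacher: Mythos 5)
Your proof is correct and follows essentially the same route as the paper's: block decomposition, conditioning at the start of each block so that the in-block rewards are governed by $\m{P}^k_{\sigma[h^{j_s}]}$, the defining inequality of $n_0$ on each class $B^i_{j_s}$, and Lemma \ref{lemma: POMDP proof} to replace the (random) weights by $p^*(K_i)$ before averaging over blocks. The only differences are presentational — you make the conditioning on $h_{sn_0+1}$ and the identity $\gamma^{p^*}_{\infty}(\sigma)=\sum_i p^*(K_i)\gamma^i_{\infty}$ explicit, where the paper writes the same argument via indicator functions and nested expectations.
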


\begin{proof}
We have, for all $\ell \ge 0$, 
\begin{align*}
    \m{E}^{p^*}_{\sigma'} \left(\frac{1}{n_0}\sum_{m = \ell n_0 + 1}^{(\ell + 1) n_0} G_m \right)
    &= \m{E}^{p^*}_{\sigma'} \left(  \m{E}^{K_{\ell n_0 + 1}}_{\sigma[h^{J_\ell}]} \left(\frac{1}{n_0}\sum_{m=1}^{n_0} G_m \right) \right)
        &; \text{def. } \sigma'
    \\
    &= \m{E}^{p^*}_{\sigma'} \left(  \sum_{i = 1}^{I} \sum_{k \in \mc{D}^{J_\ell}_i} \PP^{p^*}_{\sigma'}(K_{\ell n_0 + 1} = k)
    \m{E}^{k}_{\sigma[h^{J_\ell}]} \left(\frac{1}{n_0}\sum_{m = 1}^{n_0} G_m \right) \right)
        &; \text{Lemma \ref{lemma: POMDP proof}}
    \\
    &\geq \m{E}^{p^*}_{\sigma'} \left(  \sum_{i=1}^{I} \sum_{k \in \mc{D}^{J_\ell}_i} \PP^{p^*}_{\sigma'}(K_{\ell n_0 + 1} = k)
    \left[\gamma^i_{\infty}(\sigma) - \eps \right] \right)
        &; \text{def. } n_0
    \\
    &= \left(  \sum_{i=1}^{I} \m{P}^{p^*}_{\sigma'} ( K_{\ell n_0 + 1} \in \mc{D}^{J_\ell}_i )
    \left[\gamma^i_{\infty}(\sigma) - \eps \right] \right)
    \\
    &= \sum_{i = 1}^I p^*(\mc{K}_i) \left[\gamma^i_{\infty}(\sigma) - \eps \right]
        &; \text{Lemma \ref{lemma: POMDP proof}}
    \\
    &= \gamma_{\infty}^{p^*}(\sigma)-\eps \,.
\end{align*}

It follows that
    \[
    \m{E}^{p^*}_{\sigma'} \left(\frac{1}{L n_0} \sum_{m=1}^{L n_0} G_m \right)
        = \frac{1}{L} \sum_{\ell = 0}^{L - 1} \m{E}^{p^*}_{\sigma'} \left(\frac{1}{n_0}\sum_{ m = \ell n_0 + 1}^{(\ell + 1) n_0} G_m \right)
        \geq \gamma^{p^*}_{\infty}(\sigma) - \eps \,.
    \]
\end{proof}
To conclude, since $\sigma'$ has finite memory, we have
\begin{equation*}
    \lim_{L \to +\infty} \m{E}^{p^*}_{\sigma'} \left(\frac{1}{L n_0} \sum_{m = 1}^{L n_0} G_m \right)
        = \m{E}^{p^*}_{\sigma'} \left(\liminf_{n \rightarrow +\infty} \frac{1}{n} \sum_{m = 1}^n G_m \right)
        = \gamma_{\infty}^{p^*}(\sigma') \,,
\end{equation*}
and the above lemma implies that $\gamma_{\infty}^{p^*}(\sigma') \geq \gamma_{\infty}^{p^*}(\sigma)-\varepsilon$, which proves Lemma \ref{Lemma: From ergodic to finite}: for each ergodic strategy $\sigma$  and $\eps > 0$, one can construct a finite-memory strategy $\sigma'$ that guarantees the reward obtained by $\sigma$ up to $\eps$.

\bibliographystyle{abbrv}
\bibliography{pomdp}

\newpage
\appendix

\section{Proof of Lemma \ref{Lemma: VZ16}} 
\label{App: Lemma VZ16}


\subsection{Notation}
Recall that Lemma \ref{Lemma: VZ16} is a consequence of \cite[Lemma 33]{VZ16}. 
Thus, we start by introducing some of the terms used in \cite{VZ16}, namely: $n$-stage game, invariant measure, occupation measure and the Kantorovich-Rubinstein distance. 

\begin{Definition}[$n$-stage game]
    Given a POMDP $\Gamma = (\mc{K}, \mc{A}, \mc{S}, q, g)$, we denote $\Gamma_n$ the \emph{$n$-stage game} with a value defined by
        \[
        v_n(p)  \defas  \sup_{\sigma \in \Sigma} \gamma^p_n(\sigma) \,,
        \]
    where $\gamma^{p}_n(\sigma)  \defas  n^{-1} \EE^p_\sigma \left( \sum_{m = 1}^{n} G_m \right)$. 
\end{Definition}

\begin{Remark}
    As the notation suggests, it was proven in \cite{VZ16} that for any finite POMDP $(v_n) \xrightarrow[n \to \infty]{} v_\infty$ uniformly. The fact that $(v_n)_{n \ge 1}$ converges was proven in \cite{RSV02}. 
\end{Remark}

The set $\Delta(\mc{K})$ is equipped with its Borelian $\sigma$-algebra $\mathcal{B}(\Delta(\mc{K}))$, and $\mathcal{C}(\Delta(\mc{K}), [0, 1])$ denotes the set of continuous functions from $\Delta(\mc{K})$ to $[0,1]$. 

\begin{Definition}[Invariant measure]
    Given a POMDP $\Gamma = (\mc{K}, \mc{A}, \mc{S}, q, g)$, $\mu \in \Delta(\Delta(\mc{K}))$ and $\sigma \colon \Delta(\mc{K}) \to \Delta(\mc{A})$ measurable, we say that $\mu$ is \emph{$\sigma$-invariant} if $\forall f \in \mathcal{C}(\Delta(\mc{K}), [0, 1])$ we have that
        \[
        \int_{\Delta(\mc{K})} \EE\left[ f( \tilde{q}[p, \sigma(p)] ) \right] \mu(dp) 
            = \int_{\Delta(\mc{K})} f(p) \mu(dp) \,,
        \]
    where $\tilde{q} \colon \Delta(\mc{K}) \times \mc{A} \rightarrow \Delta(\Delta(\mc{K}))$ is the natural transition in $\Delta(\mc{K})$ from one belief to another, given by Bayes rule.
\end{Definition}
The above definition can be intuitively understood in the following way: if the initial belief is distributed according to $\mu$, and the decision-maker plays the stationary strategy $\sigma$ at stage 1, then the belief at stage 2 is distributed according to $\mu$ too. 
\begin{Remark}
    Since $v_\infty \colon \Delta(\mc{K}) \to [0, 1]$ is a continuous function, one can replace $f$ by $v_\infty$ in the previous definition. Moreover, interpreting $\sigma$ as a (mixed) stationary strategy,
     we would have that the sequence $( \EE^\mu_\sigma[v_\infty(P_m)] )_{m \ge 1}$ is constant.
\end{Remark}

\begin{Definition}[$m$-stage occupation measure]
    Given a POMDP $\Gamma = (\mc{K}, \mc{A}, \mc{S}, q, g)$, a measure $\mu \in \Delta(\Delta(\mc{K}))$ and a strategy $\sigma$, consider the following dynamic over $\Delta(\mc{K})$. First, $P_1$ is drawn according to $\mu$. Then, $(P_{n})_{n \geq 1}$ is obtained by playing according to $\sigma$. This way, for each $m \geq 1$, we have that $\Gamma$, $\mu$ and $\sigma$ induce a probability over $\Delta(\mc{K})$: for each measurable set $\mc{A} \subseteq \Delta(\mc{K})$, we can define $\PP^{\mu}_\sigma(P_m \in \mc{A})$. Therefore, the $m$-stage belief, $P_m$, is a random belief. 
    
    We denote the \emph{$m$-stage occupation measure} $z_m[\mu, \sigma] \in \Delta(\Delta(\mc{K}))$ by the law of $P_m$ over $\Delta(\mc{K})$. Formally, $z_m[\mu, \sigma] \colon \mathcal{B}(\Delta(\mc{K})) \to [0, 1]$ is given by, for all $\mc{C} \in \mathcal{B}(\Delta(\mc{K}))$, 
        \[
        z_m[\mu, \sigma](\mc{C}) =\PP^{\mu}_\sigma(P_m \in \mc{C}) \,.
        \]
    For sake of notation, we identify $\Delta(\mc{K})$ with the extreme points of $\Delta(\Delta(\mc{K}))$.
\end{Definition}

\begin{Definition}[Kantorovich-Rubinstein distance]
    For all $z, z' \in \Delta(\Delta(\mc{K}))$, define
        \[
        d_{KR}(z, z')  \defas  \sup_{f \in \mc{E}_1} \left| \int_{\Delta(\mc{K})} f(p) z(dp) - \int_{\Delta(\mc{K})} f(p) z'(dp)  \right| \,,
        \]
    where $\mc{E}_1$ is the set of 1-Lipschitz functions from $\Delta(\mc{K})$ to $[0, 1]$.
\end{Definition}

\begin{Remark}
    The set $\Delta(\Delta(\mc{K}))$ equipped with distance $d_{KR}$ is a compact metric space.
\end{Remark}

\subsection{Proof }

Now we can state \cite[Lemma 33]{VZ16}.

\begin{Lemma}
    \label{Lemma: Lemma 33 in Xavier and Bruno}
    Consider a POMDP $\Gamma$ and let $p_1 \in \Delta(\mc{K})$. There exists $\mu^* \in \Delta(\Delta(\mc{K}))$ and a (mixed) stationary strategy $\sigma^* \colon \Delta(\mc{K}) \to \Delta(\mc{A})$ such that 
    \begin{enumerate}

        \item $\mu^*$ is $\sigma^*$-invariant.

        \item For all $\eps > 0$ and $N \geq 1$, there exists $n_\eps \geq N$ and $\sigma^\eps$ a pure strategy in $\Gamma$ such that $\sigma^\eps$ is 0-optimal in the $n_\eps$-stage game $\Gamma_{n_\eps}(p_1)$ and
            \[
            d_{KR}\left( \frac{1}{n_\eps} \sum_{m = 1}^{n_\eps} z_m[p_1, \sigma^\eps] , \mu^* \right) \leq \eps \,.
            \]
        
        \item
            \[
            \int_{\Delta(\mc{K})} g(p, \sigma^*(p)) \mu^*(dp) 
                = \int_{\Delta(\mc{K})} v_\infty(p) \mu^*(dp) 
                = v_\infty(p_1) \,.
            \]

    \end{enumerate} 
\end{Lemma}

\begin{Remark}
     Lemma \ref{Lemma: Lemma 33 in Xavier and Bruno} works with elements in $\Delta(\Delta(\mc{K}))$ and a mixed stationary strategy, while Lemma \ref{Lemma: VZ16} deals with (random) elements in $\Delta(\mc{K})$ and a pure strategy. In this sense, we would like to ``go down a level'': moving from $\mu^*$ to a random $P^*$, from $z_m$ to $P_m$ and still preserve a relationship between $v_\infty(p_1)$ and $\EE_{\sigma^\eps}^{p_1}( v_\infty(P^*))$. The ergodic property \ref{ergodic} of Lemma \ref{Lemma: VZ16} follows from the first and third property of Lemma \ref{Lemma: Lemma 33 in Xavier and Bruno}.
\end{Remark}

\begin{proof}[Proof of Lemma \ref{Lemma: VZ16}]   	 
	Consider $p_1 \in \Delta(\mc{K})$ and $\eps > 0$ fixed. Since $(v_n)_{n \geq 1}$ converges uniformly to $v_\infty$, consider $N \geq 1$ such that $\eps \geq 1/N$ and  $\forall n \geq N \quad || v_n - v_\infty ||_\infty \leq \eps$. Now, using Lemma \ref{Lemma: Lemma 33 in Xavier and Bruno}, there exists $\mu^*$ and $\sigma^*$ such that $\mu^*$ is $\sigma^*$-invariant and, considering $\eps^4$, $\exists n_\eps \geq N$ such that 
        \[
        d_{KR}\left( \frac{1}{n_\eps} \sum_{m = 1}^{n_\eps} z_m[p_1, \sigma^\eps] , \mu^* \right) \leq \eps^4 \,,
        \]
    with $\sigma^\eps \in \Gamma_{n_\eps}(p_1)$ an optimal pure strategy for the $n_\eps$-stage game starting in $p_1$. 
    
    We claim that $\exists m_\eps \leq \lceil \eps n_\eps \rceil$ such that 
    \begin{equation} \label{close_to_support}
    \PP^{p_1}_{\sigma^\eps}( P_{m_\eps} \in \Supp(\mu^*) + B(0, \eps) ) > 1 - \eps. 
    \end{equation}
    Proceeding by contradiction, assume that $\forall m \leq \lceil \eps n_\eps \rceil$, we have $\PP^{p_1}_{\sigma^\eps}( P_{m} \in \Supp(\mu^*) + B(0, \eps) ) \leq 1 - \eps$. Define the function $f \colon \Delta(\Delta(\mc{K})) \to [0, 1]$ by $f(p) = d_\infty(p, \Supp(\mu^*))$, the supremum distance from $\Supp(\mu^*)$. Clearly, $f \in \mc{E}_1$. Moreover, 
    \begin{align*}
    \eps^4 
        &\geq d_{KR}\left( \frac{1}{n_\eps} \sum_{m = 1}^{n_\eps} z_m[p_1, \sigma^\eps] , \mu^* \right) \\
        &\geq \left| \int_{\Delta(\mc{K})} f(p) \frac{1}{n_\eps} \sum_{m = 1}^{n_\eps} z_m[p_1, \sigma^\eps](dp) - \int_{\Delta(\mc{K})} f(p) \mu^*(dp)  \right| \\
        &\geq \left| \frac{1}{n_\eps} \sum_{m = 1}^{n_\eps} \int_{\Delta(\mc{K})} f(p) z_m[p_1, \sigma^\eps](dp)  \right| 
            &; f(p) = 0, p \in \Supp(\mu^*) \\
        &\geq \frac{1}{n_\eps} \sum_{m = 1}^{\lceil \eps n_\eps \rceil} \int_{\Delta(\mc{K}) \setminus (\Supp(\mu^*) + B(0, \eps))} f(p) z_m[p_1, \sigma^\eps](dp) 
            &; f, z_m[p_1, \sigma^\eps] \geq 0 \\
        &\geq \eps \frac{1}{n_\eps} \sum_{m = 1}^{\lceil \eps n_\eps \rceil} z_m[p_1, \sigma^\eps]( \Delta(\mc{K}) \setminus (\Supp(\mu^*) + B(0, \eps)) 
            &;\text{def. of }f\\
        &\geq \eps^3 
            &; \text{contradiction hypothesis,}
    \end{align*}
    which is a contradiction for $\eps < 1$. Thus, we have proven \eqref{close_to_support}.
    
    Take $P^* \in \argmin_{p \in \Supp(\mu^*)} \left\|p-P_{m_\eps}\right\|_1$.
    By equation (\ref{close_to_support}), the first property of Lemma \ref{Lemma: VZ16} is satisfied. 
    
    For the second property of Lemma \ref{Lemma: VZ16}, note that, with probability higher than $1-\eps$,
    \begin{align*}
    v_\infty(P^*)
        &\geq v_{n_\eps}(P^*) - \eps
             &; ||v_{n_\eps} - v_\infty||_\infty \leq \eps \\
        &\geq v_{n_\eps}(P_{m_\eps}) - 2\eps &; v_{n_\eps} \ \text{is 1-Lipschitz} \,. 
    \end{align*}
    
    On the other hand, taking expectation we get that
    \begin{align*}
    \EE^{p_1}_{\sigma^\eps} \left( v_{n_\eps}(P_{m_\eps}) \right) 
        &\ge \EE^{p_1}_{\sigma^\eps} \left( v_{n_\eps - m_\eps}(P_{m_\eps}) \right) - \eps
             &; m_\eps \leq \lceil \eps n_\eps \rceil \\
        &\ge v_{n_\eps}(p_1) - 2 \eps
        	&; \sigma^\eps \ \text{is} \ \text{$0$-optimal} \ \text{in} \ \Gamma_{n_\eps}(p_1) \\
        &\geq v_\infty(p_1) - 3\eps
             &; \forall n \geq N \quad ||v_n - v_\infty||_\infty \leq \eps \,.                  
    \end{align*}     

    Therefore, we conclude that
        \[
        \EE^{p_1}_{\sigma^\eps} \left( v_\infty(P^*) \right) \geq v_\infty(p_1) - 6\eps \,.
        \]
        
    To complete the proof of Lemma \ref{Lemma: VZ16}, given $P^*$, we need a (pure) strategy $\sigma$ such that
    \begin{equation} \label{prop_conv}
    \forall \ k \in \Supp(P^*) \quad
        \left( \frac{1}{n} \sum_{m=1}^n G_m \right) \xrightarrow[n \to \infty]{}  \gamma^k_{\infty}(\sigma)  \quad \m{P}^k_{\sigma}-a.s. 
    \end{equation}
    and such that 
    \begin{equation} \label{prop_value}
    \gamma_{\infty}^{P^*}(\sigma) = v_{\infty}(P^*).
    \end{equation}
    
    Consider the random process $(Y_m)_{m \geq 1}$ on $\mc{Y} \defas \mc{K} \times \mc{A} \times \Delta(\mc{K})$ defined by $Y_m  \defas  (K_m, A_m, P_m)$. We claim that, under $\sigma^*$, the process $(Y_m)_{m \ge 1}$ is a Markov chain. Indeed, given $m \geq 1$ and $(Y_1, \ldots, Y_m) \in \mc{Y}^m$, $Y_{m+1}$ is generated by the following procedure:
    \begin{enumerate}
        
        \item Draw a pair $(K_{m+1}, S_m)$ according to $q(K_m, A_m)$,
        
        \item Compute $P_{m+1}$ using Bayes rule according to $P_m$ and $S_m$,
        
        \item Draw the next action $A_{m+1}$ according to $\sigma^*(P_{m+1})$.
    \end{enumerate}
    By construction, the law of $Y_{m+1}$ depends only on $Y_m$ and therefore $(Y_m)_{m \ge 1}$ is a Markov chain.
    
    Define $\nu^* \in \Delta(\mc{Y})$ by fixing the third marginal to $\mu^*$ and for all $p \in \Delta(\mc{K})$, $\nu^*( \cdot \mid p) \in \Delta(\mc{K} \times \mc{A})$ is $p \otimes \sigma^*(p)$. We claim that $\nu^*$ is an invariant measure for $(Y_m)_{m \geq 1}$. Indeed, fixing $\sigma^*$ as the strategy for the player, if $P_1$ is drawn according to $\mu^*$, then, since $\mu^*$ is $\sigma^*$-invariant, the third marginal of $Y_m$ follows $\mu^*$, for all $m \geq 1$. Moreover, conditional on $P_m$, the random variables $K_m$ and $A_m$ are independent: the conditional distribution of $K_m$ is $P_m$ and the one of $A_m$ is $\sigma^*(P_m)$. Thus, $\nu^*$ is an invariant measure of $(Y_m)_{m \ge 1}$.
    
    The strategy $\sigma^* \colon \Delta(\mc{K}) \to \Delta(\mc{A})$ is a (stationary) mixed strategy, and we are looking for a deterministic strategy $\sigma \in \Sigma$. To derandomize this strategy, note that $\sigma^*$ starting from any $p \in \Delta(\mc{K})$ is strategically equivalent to a $p$-dependent element of $\Delta(\Sigma)$, that is, a distribution over pure (not necessarily stationary) strategies (Kuhn's theorem, see \cite{F96}). To simplify notations, we still denote this equivalent strategy $\sigma^*$, and omit its dependence in $p$.
    
    Define $f \colon \mc{Y} \to [0, 1]$ by $f(k, a, p)  \defas  g(k, a)$, a measurable function. Applying an ergodic theorem in Hernández-Lerma and Lasserre \cite[Theorem 2.5.1, page 37]{HL03}, we know that $\exists f^* $ integrable with respect to $\mu^*$ such that, for all $p \in \Supp(\mu^*)$ and $\sigma \in \Supp(\sigma^*)$,
        \[
        \left( \frac{1}{n} \sum_{m = 1}^n f(K_m, A_m, P_m) 
            = \frac{1}{n} \sum_{m = 1}^n G_m 
        \right)
            \xrightarrow[n \to \infty]{} f^*(K_1, a_1, p)
        \quad \PP^{p}_{\sigma}-a.s. \,,
        \]
    where $f^*$ satisfies that $\int_{\Delta(\mc{K})} f^*(y) \nu^*(dy) = \int_{\Delta(\mc{K})} f(y) \nu^*(dy)$. 
    
    We claim that, for all $p \in \Supp(\mu^*)$ and $\sigma \in \Supp(\sigma^*)$, we have that, for all $k \in \Supp(p)$, 
        \[
        f^*(k, a_1 , p) = \gamma^{k}_\infty(\sigma) \,,
        \]
    where $a_1$ is the first action according to $\sigma$ (formally, $a_1 = \sigma(\emptyset) = \sigma^*(p)$).

    Indeed, take $p \in \Supp(\mu^*)$, $\sigma \in \Supp(\sigma^*)$ and $k \in \Supp(p)$, then
        \[
        \gamma^k_\infty(\sigma) 
            = \EE^k_{\sigma}\left( \liminf_{n \to \infty} \frac{1}{n} \sum_{m=1}^n G_m \right) 
            = \EE^k_{\sigma}\left( f^*(K_1, a_1, p) \right) 
            = f^*(k, a_1, p) \,.
        \]
    Since $\Supp(P^*) \subseteq \{ k \in \mc{K} : \exists p \in \Supp(\mu^*) \text{ s.t. } k \in \Supp(p) \}$, property \eqref{prop_conv} is satisfied. 

	Let us now turn to property \eqref{prop_value}. 
    We claim that, $\mu^*-a.s.$ and $\sigma^*-a.s.$, 
        \[
        \gamma^{p}_\infty(\sigma) = v_\infty(p) \,.
        \]
     Indeed, note that 
    \begin{align*}
        \int_{\Delta(\mc{K})} \int_{\Sigma} \gamma^{p}_\infty(\sigma) \sigma^*(d\sigma) \mu^*(dp)
            &= \int_{\Delta(\mc{K})} f^*(y) \nu^*(dy) 
                &; \text{def. of } \nu^* \\
            &= \int_{\Delta(\mc{K})} f(y) \nu^*(dy) \\
            &= \int_{\Delta(\mc{K})} g(p, \sigma^*(p)) \mu^*(dp)
                &; \text{def. of } \nu^* \\
            &= \int_{\Delta(\mc{K})} v_\infty(p) \mu^*(dp) 
                &; Lemma~\ref{Lemma: Lemma 33 in Xavier and Bruno} \,.
    \end{align*}
    By definition of $v_\infty$, for all $\sigma \in \Sigma$, $\gamma^\cdot_\infty(\sigma) \leq v_\infty(\cdot)$. Therefore, by positivity, we can conclude that, $\mu^*-a.s.$ and $\sigma^*-a.s.$, $\gamma^{p}_\infty(\sigma) = v_\infty(p)$. 
     Since the support of $P^*$ is included in $\Supp(\mu^*)$, 
    we conclude that 
    	\[
    	\gamma^{P^*}_\infty(\sigma) = v_\infty(P^*) \,,
    	\]
	and property (\ref{prop_value}) is satisfied.
\end{proof}

\end{document}